\theoremstyle{plain}
\newtheorem{theorem}{Theorem}[section]
\newtheorem{lemma}[theorem]{Lemma}
\theoremstyle{definition}
\newtheorem{definition}[theorem]{Definition}
\theoremstyle{remark}
\newtheorem{remark}{Remark}
\begin{document}

\title{Time- and frequency-limited $\mathcal{H}_2$-optimal model order reduction of bilinear control systems}

\author{
\name{Umair~Zulfiqar\textsuperscript{a}\thanks{CONTACT Umair~Zulfiqar. Email: umair.zulfiqar@research.uwa.edu.au}, Victor Sreeram\textsuperscript{a}, Mian~Ilyas~Ahmad\textsuperscript{b}, and Xin~Du\textsuperscript{c}}
\affil{\textsuperscript{a}School of Electrical, Electronics and Computer Engineering, The University of Western Australia (UWA), Perth, Australia; \textsuperscript{b}Research Centre for Modelling and Simulation, National University of Sciences and Technology (NUST), Islamabad, Pakistan; \textsuperscript{c}School of Mechatronic Engineering and Automation, and Shanghai Key Laboratory of Power Station Automation Technology, Shanghai University, Shanghai, China}
}

\maketitle

\begin{abstract}
In the time- and frequency-limited model order reduction, a reduced-order approximation of the original high-order model is sought to ensure superior accuracy in some desired time and frequency intervals. We first consider the time-limited $\mathcal{H}_2$-optimal model order reduction problem for bilinear control systems and derive first-order optimality conditions that a local optimum reduced-order model should satisfy. We then propose a heuristic algorithm that generates a reduced-order model, which tends to achieve these optimality conditions. The frequency-limited and the time-limited $\mathcal{H}_2$-pseudo-optimal model reduction problems are also considered wherein we restrict our focus on constructing a reduced-order model that satisfies a subset of the respective optimality conditions for the local optimum. Two new algorithms have been proposed that enforce two out of four optimality conditions on the reduced-order model upon convergence. The algorithms are tested on three numerical examples to validate the theoretical results presented in the paper. The numerical results confirm the efficacy of the proposed algorithms.
\end{abstract}

\begin{keywords}
$\mathcal{H}_2$-optimal; bilinear systems; frequency-limited; model order reduction; pseudo-optimal; time-limited
\end{keywords}

\section{Introduction}
The dynamic behaviour of a physical system is often studied by developing a mathematical model that effectively encompasses its physical characteristics. Conventionally, a direct numerical simulation is conducted using the mathematical model of the system to study its behavior. However, in many applications, we deal with large scale models for which direct simulations are computationally expensive. This motivates the use of model order reduction (MOR) algorithms that generate a reduced-order approximation of the original large-scale model, which effectively mimics its characteristics at a significantly low computational cost \citep{schilders2008model}. In this paper, we discuss the problem of MOR for bilinear systems, which is a special class of nonlinear systems. The strength of bilinear systems lies in the fact that much of its system theory can be developed from linear systems \citep{isidori1973realization}. Bilinear systems find applications in a variety of practical problems like electrical networks, heat transfer, hydraulic systems, fluid flow, and chemical process \citep{mohler1973bilinear,rugh1981nonlinear}. They also find applications in stochastic control problems \citep{benner2011lyapunov,hartmann2013balanced}.

Balanced truncation (BT) is a well-used classical MOR technique for linear systems \citep{moore1981principal}. The approach is famous for its good approximation accuracy, stability preservation, and \textit{apriori} error bound expression. Its theory is extended to bilinear systems in \citep{hsu1983realization,al1993new,zhang2003gramians}. A new definition of the system gramians for bilinear systems is presented in \citep{benner2017truncated}, and its connection with the energy functionals is established. Then the BT method is performed using this new definition of the system gramians to obtain the reduced-order model (ROM).

In the frequency-limited and time-limited MOR scenarios, the goal is to obtain a superior approximation accuracy within some desired frequency and time intervals instead of trying to maintain accuracy over the entire frequency and time ranges. For linear systems, the BT method is generalized to the frequency-limited BT (FLBT) and the time-limited BT (TLBT) in \citep{gawronski1990model}. Several extensions of the FLBT and the TLBT have been reported in the literature to reduce the computational cost \citep{jazlan2015cross,benner2016frequency,kurschner2018balanced}. The FLBT and the TLBT have been extended to bilinear systems in \citep{shaker2013frequency} and \citep{shaker2014time}, respectively.

The $\mathcal{H}_2$-optimal MOR problem has received a lot of attention in the literature wherein the ROM satisfies an optimization criteria in $\mathcal{H}_2$-norm. In \citep{wilson1970optimum}, first-order optimality conditions for the $\mathcal{H}_2$-optimal MOR of a linear system are derived that serve as a foundation for most of the algorithms presented for this problem. The interpolation-based framework for obtaining a ROM that satisfies these optimality conditions is proposed in \citep{gugercin2008h_2,van2008h2}. The $\mathcal{H}_2$-optimal MOR for bilinear systems is considered in \citep{zhang2002h2}. For more general multi-input multi-output (MIMO) bilinear systems, the problem is addressed using tangential interpolation theory as discussed in \citep{breiten2012interpolation,flagg2015multipoint}.

In the $\mathcal{H}_2$-pseudo-optimal MOR problem, a subset of first-order optimality conditions is satisfied instead of the full set to guarantee some other properties like stability. This is the central theme of the work presented in \citep{wolf2013h,wolf2014h,panzer2014model} for linear systems. In \citep{cruz2016interpolation}, the $\mathcal{H}_2$-pseudo-optimal MOR theory is extended to bilinear systems.

The frequency-limited scenario of the $\mathcal{H}_2$-optimal MOR problem for linear systems is considered in \citep{petersson2014model}, and first-order optimality conditions for the local optimum are derived. The optimal ROM in these algorithms is obtained using nonlinear optimization algorithms. In \citep{vuillemin2014frequency}, these optimality conditions are described as bi-tangential Hermite interpolation conditions, and a descent-based algorithm is presented. In \citep{vuillemin2013h2}, the algorithms in \citep{xu2011optimal} and \citep{gugercin2008iterative} are heuristically generalized to the frequency-limited MOR scenario. This algorithm does not satisfy any optimality condition but provides good approximation accuracy. In \citep{zulfiqar2020frequency}, the frequency-limited case of the $\mathcal{H}_2$-pseudo-optimal MOR is considered, and iteration-free algorithms are presented that generate a ROM, which satisfies a subset of first-order optimality conditions. The frequency-limited $\mathcal{H}_2$-optimal MOR problem for the bilinear systems is considered in \citep{xu2017approach}, and first-order optimality conditions are derived. Then a heuristic algorithm (similar to the one in \citep{vuillemin2013h2}) is presented that tends to achieve these optimality conditions.

The time-limited scenario of the $\mathcal{H}_2$-optimal MOR problem for linear systems is considered in \citep{goyal2019time}, and first-order optimality conditions for the local optimum are derived. A heuristic generalization of \citep{xu2011optimal} is also presented in \citep{goyal2019time} that tends to achieve these optimality conditions. In \citep{sinani2019h2}, the optimality conditions are described as bi-tangential Hermite interpolation conditions, and a descent-based algorithm is presented. The time-limited case of the $\mathcal{H}_2$-pseudo-optimal MOR problem is considered in \citep{zulfiqar2019time}, and iteration-free algorithms are presented that generates a ROM, which satisfies a subset of first-order optimality conditions. To the best of our knowledge, the time-limited $\mathcal{H}_2$-optimal MOR problem for bilinear systems is not considered so far in the existing literature.

In this paper, we first define the time-limited $\mathcal{H}_2$-norm for bilinear systems. We then formulate the time-limited $\mathcal{H}_2$-optimal MOR problem using this definition. Then we derive first-order optimality conditions for this problem and give a heuristic algorithm (similar to the one in \citep{goyal2019time}) that construct a ROM, which tends to achieve these optimality conditions. Also, we discuss the reasons why our algorithm and the algorithm in \citep{xu2017approach} may not achieve first-order optimality conditions for their respective problems. In addition, the pseudo-optimal cases are considered for both the time-limited and frequency-limited $\mathcal{H}_2$-MOR problems. Two new algorithms are proposed that achieve a subset of first-order optimality conditions. Lastly, we validate the theory developed in this paper with the help of three numerical examples.

\section{Preliminaries}

Consider a bilinear control system $\Sigma$ with the following state-space equations
\begin{align}
\Sigma :
\begin{dcases}
\dot{x}(t)&=Ax(t)+\sum_{k=1}^{m}N_kx(t)u_k(t)+Bu(t),\\
y(t)&=Cx(t)
\end{dcases}\label{eq:1}
\end{align} where $A\in \mathbb{R}^{n\times n}$, $N_k\in \mathbb{R}^{n\times n}$, $B\in\mathbb{R}^{n\times m}$, and $C\in\mathbb{R}^{p\times n}$. Moreover, $x(t)$, $u(t)$, $y(t)$ are states, control inputs, and outputs, respectively. If the initial conditions are zero, i.e., $x(0)=0$, the output $y(t)$ can be written as the following Volterra series, i.e.,
\begin{align}
y(t)=\sum_{i=1}^{+\infty}\int_{0}^{t}\int_{t_1}^{t}\cdots\int_{t_{i-1}}^{t}\sum_{k_1,k_2,\cdots,k_i}^{m}h_i^{(k_1,k_2,\cdots,k_i)}(t_1,t_2,\cdots,t_i)u_{k_1}(t-t_i)u_{k_2}(t-t_{i-1})\cdots \nonumber\\
u_{k_i}(t-t_{1})dt_1\cdots dt_i\nonumber
\end{align}where
\begin{align}
h_i^{(k_1,k_2,\cdots,k_i)}(t_1,t_2,\cdots,t_i)=Ce^{At_i}N_{k_1}e^{At_{i-1}}N_{k_2}\cdots N_{k_{i-1}}e^{At_1}b_{k_i}\label{1.2}
\end{align} and $b_{k_i}$ is the $k_i^{th}$ column of $B$ \citep{al1993new}. The $i^{th}$ transfer function of $\Sigma$ can be obtained by taking multivariate Laplace transform of (\ref{1.2}), i.e.,
\begin{align}
H_i^{(k_1,k_2,\cdots,k_i)}(s_1,s_2,\cdots,s_i)=C(s_iI-A)^{-1}N_{k_1}N_{k_2}\cdots N_{k_{i-1}}(s_1I-A)^{-1}b_{k_i}.\nonumber
\end{align}
The controllability gramian $P$ \citep{al1993new} for the system in (\ref{eq:1}) is defined as
\begin{align}
P=\sum_{i=1}^{\infty}\int_{0}^{\infty}\cdots\int_{0}^{\infty}P_iP_i^Tdt_1\cdots dt_i\nonumber
\end{align} where
\begin{align}
P_1=e^{At_1}B,\hspace*{2mm}P_i=e^{At_i}\begin{bmatrix}N_1P_{i-1}&\cdots&N_mP_{i-1}\end{bmatrix},\hspace*{2mm}i=2,3,4,\ldots.\nonumber
\end{align}
The observability gramian $Q$ \citep{al1993new} for the system in (\ref{eq:1}) is defined as
\begin{align}
Q=\sum_{i=1}^{\infty}\int_{0}^{\infty}\cdots \int_{0}^{\infty}Q_i^TQ_idt_1 \cdots dt_i\nonumber
\end{align} where
\begin{align}
Q_1=Ce^{At_1},\hspace*{2mm} Q_i=\begin{bmatrix}N_1^TQ_{i-1}^T&\cdots&N_m^TQ_{i-1}^T\end{bmatrix}^Te^{At_i},\hspace*{2mm}i=2,3,4,\ldots.\nonumber
\end{align}
The gramians $P$ and $Q$ solve the following generalized Lyapunov equations
\begin{align}
AP+PA^T+\sum_{k=1}^{m}N_kPN_k^T+BB^T&=0,\nonumber\\
A^TQ+QA+\sum_{k=1}^{m}N_k^TQN_k+C^TC&=0.\nonumber
\end{align}
A detailed discussion about the existence, uniqueness, and solvability of these generalized Lyapunov equations can be found in \citep{benner2011lyapunov}.

The time-limited controllability gramian $P_\tau$ \citep{shaker2014time} for the system in (\ref{eq:1}) within the time interval $[0,\tau]$ sec is defined as
\begin{align}
P_\tau=\sum_{i=1}^{\infty}\int_{0}^{\tau}\cdots\int_{0}^{\tau}P_iP_i^Tdt_1\cdots dt_i.\nonumber
\end{align}
Similarly, the time-limited observability gramian $Q_\tau$ \citep{shaker2014time} for the system in (\ref{eq:1}) within the time interval $[0,\tau]$ sec is defined as
\begin{align}
Q_\tau=\sum_{i=1}^{\infty}\int_{0}^{\tau}\cdots \int_{0}^{\tau}Q_i^TQ_idt_1 \cdots dt_i.\nonumber
\end{align}
The gramians $P_\tau$ and $Q_\tau$ solve the following generalized Lyapunov equations
\begin{align}
AP_\tau+P_\tau A^T+\sum_{k=1}^{m}(N_kP_\tau N_k^T-e^{A\tau}N_kP_\tau N_k^Te^{A^T\tau})+BB^T-e^{A\tau}BB^Te^{A^T\tau}&=0,\label{eq:16}\\
A^TQ_\tau+Q_\tau A+\sum_{k=1}^{m}(N_k^TQ_\tau N_k-e^{A^T\tau}N_k^TQ_\tau N_ke^{A\tau})+C^TC-e^{A^T\tau}C^TCe^{A\tau}&=0.\label{eq:17}
\end{align}
A detailed discussion about the existence, uniqueness, and solvability of these generalized Lyapunov equations can be found in \citep{shaker2014time}.

The frequency-limited controllability gramian $P_\omega$ \citep{shaker2013frequency} for the system in (\ref{eq:1}) within the frequency interval $[0,\omega]$ rad/sec is defined as
\begin{align}
P_\omega=\sum_{i=1}^{\infty}\frac{1}{(2\pi)^i}\int_{-\omega}^{\omega}\cdots\int_{-\omega}^{\omega}\bar{P}_i\bar{P}_i^Td\nu_1\cdots d\nu_i\nonumber
\end{align}
where
\begin{align}
\bar{P}_1=(j\nu_1I-A)^{-1}B,\hspace*{2mm}\bar{P}_i=(j\nu_iI-A)^{-1}\begin{bmatrix}N_1\bar{P}_{i-1}&\cdots&N_m\bar{P}_{i-1}\end{bmatrix},\hspace*{2mm}i=2,3,4,\ldots.\nonumber
\end{align}
Similarly, the frequency-limited observability gramian $Q_\omega$ \citep{shaker2013frequency} for the system in (\ref{eq:1}) within the frequency interval $[0,\omega]$ rad/sec is defined as
\begin{align}
Q_\omega=\sum_{i=1}^{\infty}\frac{1}{(2\pi)^i}\int_{-\omega}^{\omega}\cdots \int_{-\omega}^{\omega}\bar{Q}_i^T\bar{Q}_id\nu_1 \cdots d\nu_i\nonumber
\end{align} where
\begin{align}
\bar{Q}_1=C(j\nu_1I-A)^{-1},\hspace*{2mm} \bar{Q}_i=\begin{bmatrix}N_1^T\bar{Q}_{i-1}^T&\cdots&N_m^T\bar{Q}_{i-1}^T\end{bmatrix}^T(j\nu_iI-A)^{-1},\hspace*{2mm}i=2,3,4,\ldots.\nonumber
\end{align}
The gramians $P_\omega$ and $Q_\omega$ solve the following generalized Lyapunov equations
\begin{align}
AP_\omega+P_\omega A^T+\sum_{k=1}^{m}(F_\omega[A] N_kP_\omega N_k^T&+N_kP_\omega N_k^TF_\omega[A]^T)\nonumber\\
&+F_\omega[A] BB^T+BB^TF_\omega[A]^T=0,\\
A^TQ_\omega+Q_\omega A+\sum_{k=1}^{m}(F_\omega[A]^TN_k^TQ_\omega N_k&+N_k^TQ_\omega N_kF_\omega[A])\nonumber\\
&+F_\omega[A]^T C^TC+C^TCF_\omega[A]=0
\end{align} where $F_\omega[A]=\frac{1}{2\pi}\int_{-\omega}^{\omega}(j\nu I-A)^{-1}d\nu=Real(\frac{j}{\pi}ln(-A-j\omega I))$ \citep{petersson2014model}. A detailed discussion about the existence, uniqueness, and solvability of these generalized Lyapunov equations can be found in \citep{shaker2013frequency}.

The frequency-limited $\mathcal{H}_2$-norm \citep{xu2017approach}, i.e, $\mathcal{H}_{2,\omega}$-norm, of $\Sigma$ within the frequency interval $[0,\omega]$ rad/sec is defined as the following
\begin{align}
  ||\Sigma||_{\mathcal{H}_{2,\omega}}&=\sqrt{\begin{aligned}trace\Big(\sum_{i=1}^{\infty}\frac{1}{(2\pi)^i}\int_{-\omega}^{\omega}\cdots\int_{-\omega}^{\omega}&\sum_{k_1,\cdots,k_i=1}^{m}H_i^{(k_1,k_2,\cdots,k_i)}(j\nu_1,j\nu_2,\cdots,j\nu_i)\\
  &\times\big(H_i^{(k_1,k_2,\cdots,k_i)}(j\nu_1,j\nu_2,\cdots,j\nu_i)\big)^*d\nu_1\cdots d\nu_i\Big)\end{aligned}}\nonumber\\
  &=\sqrt{trace(CP_\omega C^T)}\nonumber\\
  &=\sqrt{\begin{aligned}trace\Big(\sum_{i=1}^{\infty}\frac{1}{(2\pi)^i}\int_{-\omega}^{\omega}\cdots\int_{-\omega}^{\omega}&\sum_{k_1,\cdots,k_i=1}^{m}\big(H_i^{(k_1,k_2,\cdots,k_i)}(j\nu_1,j\nu_2,\cdots,j\nu_i)\big)^*\\
  &\times H_i^{(k_1,k_2,\cdots,k_i)}(j\nu_1,j\nu_2,\cdots,j\nu_i)d\nu_1\cdots d\nu_i\Big)\end{aligned}}\nonumber\\
  &=\sqrt{trace(B^TQ_\omega B)}\nonumber
\end{align} where $\begin{bmatrix}\cdot\end{bmatrix}^*$ represents the Hermitian.
\subsection{Problem Statement}
The MOR problem under consideration is to obtain a ROM $\tilde{\Sigma}$ of $\Sigma$ such that $\tilde{\Sigma}$ accurately mimics $\Sigma$ when used as a surrogate. Let $\tilde{\Sigma}$ is represented by the following state-space equations
\begin{align}
\tilde{\Sigma} :
\begin{dcases}
\dot{\tilde{x}}(t)&=\tilde{A}\tilde{x}(t)+\sum_{k=1}^{m}\tilde{N}_{k}\tilde{x}(t)u_k(t)+\tilde{B}u(t),\\
\tilde{y}(t)&=\tilde{C}\tilde{x}(t)
\end{dcases}\label{eq:2}
\end{align} where $\tilde{A}\in \mathbb{R}^{r\times r}$, $\tilde{N}_{k}\in \mathbb{R}^{r\times r}$, $\tilde{B}\in\mathbb{R}^{r\times m}$, and $\tilde{C}\in\mathbb{R}^{p\times r}$ such that $r\ll n$. In projection-based MOR, $\tilde{\Sigma}$ is computed as the following
\begin{align}
\tilde{A}&=W^TAV,&\tilde{N}_k&=W^TN_kV,&\tilde{B}&=W^TB,&\tilde{C}&=CV
\end{align} where $W^TV=I$ and $V,W\in \mathbb{R}^{n \times r}$ such that their column spans form basis to some specific $r$-dimensional subspaces. The quality of the approximation of $\Sigma$ is quantified by using various norms for the error expression $\Sigma_e=\Sigma-\tilde{\Sigma}$. The error system $\Sigma_e$ has the following state-space realization
\begin{align}
\Sigma_e :
\begin{dcases}
\dot{x_e}(t)&=A_ex_e(t)+\sum_{k=1}^{m}N_{ek}x_e(t)u_k(t)+B_eu(t),\\
y_e(t)&=C_ex_e(t)
\end{dcases}\nonumber
\end{align} where
\begin{align}
A_e=\begin{bmatrix}A&0\\0&\tilde{A}\end{bmatrix},\hspace*{2mm}N_{ek}=\begin{bmatrix}N_k&0\\0&\tilde{N}_k\end{bmatrix},\hspace*{2mm}B_e=\begin{bmatrix}B\\\tilde{B}\end{bmatrix},\hspace*{2mm}C_e=\begin{bmatrix}C&-\tilde{C}\end{bmatrix}.\label{1.23}
\end{align}

In some practical situations, it is desirable to ensure that $\tilde{\Sigma}$ accurately approximates $\Sigma$ within the desired frequency interval $[0,\omega]$ rad/sec. The $\mathcal{H}_{2,\omega}$-norm is generally used to quantify the approximation error in this scenario \citep{xu2017approach}. The $\mathcal{H}_{2,\omega}$-MOR problem is to find a ROM of order $r$ which ensures that $||\Sigma_e||^2_{\mathcal{H}_{2,\omega}}$ is small, i.e., $\underset{\substack{\tilde{\Sigma}\\\textnormal{order}=r}}{\text{min}}||\Sigma_e||^2_{\mathcal{H}_{2,\omega}}$.
Similarly, it is often desirable that the approximation accuracy is good within the desired time interval $[0,\tau]$ sec. We will formulate the definition of time-limited $\mathcal{H}_2$-norm, i.e., $\mathcal{H}_{2,\tau}$-norm, in the next section to quantify the quality of approximation in this scenario. The $\mathcal{H}_{2,\tau}$-MOR problem is to find an ROM of order $r$ which ensures that $||\Sigma_e||^2_{\mathcal{H}_{2,\tau}}$ is small, i.e., $\underset{\substack{\tilde{\Sigma}\\\textnormal{order}=r}}{\text{min}}||\Sigma_e||^2_{\mathcal{H}_{2,\tau}}$.
\subsection{Frequency-limited $\mathcal{H}_2$-optimal MOR}
Let $\tilde{P}_\omega$ and $\tilde{Q}_\omega$ be the frequency-limited controllability and frequency-limited observability gramians of $\tilde{\Sigma}$, respectively. Then the squared $\mathcal{H}_{2,\omega}$-norm of $\Sigma_e$ can be expressed as
\begin{align}
||\Sigma_e||_{\mathcal{H}_{2,\omega}}^2&=trace(CP_\omega C^T-2C\hat{P}_\omega\tilde{C}^T+\tilde{C}\tilde{P}_\omega\tilde{C}^T)\nonumber\\
&=trace(B^TQ_\omega B+2B^T\hat{Q}_\omega\tilde{B}+\tilde{B}^T\tilde{Q}_\omega\tilde{B})\nonumber
\end{align} where $\hat{P}_\omega$ and $\hat{Q}_\omega$ solve the following generalized Sylvester equations
\begin{align}
A\hat{P}_\omega+\hat{P}_\omega\tilde{A}^T+\sum_{k=1}^{m}\big(F_\omega[A]N_k\hat{P}_\omega\tilde{N}_k^T&+N_k\hat{P}_\omega\tilde{N}_k^TF_\omega[\tilde{A}]^T\big)\nonumber\\
&+F_\omega[A]B\tilde{B}^T+B\tilde{B}^TF_\omega[\tilde{A}]^T=0,\nonumber\\
A^T\hat{Q}_\omega+\hat{Q}_\omega\tilde{A}+\sum_{k=1}^{m}\big(F_\omega[A]^TN_k^T\hat{Q}_\omega\tilde{N}_k&+N_k^T\hat{Q}_\omega\tilde{N}_kF_\omega[\tilde{A}]\big)\nonumber\\
&-F_\omega[A]^TC^T\tilde{C}-C^T\tilde{C}F_\omega[\tilde{A}]=0.\nonumber
\end{align}
Let $R_\omega$ and $S_\omega$ solve the following generalized matrix equations
\begin{align}
A^TR_\omega+R_\omega \tilde{A}+\sum_{k=1}^{m}N_k^TF_\omega[A]^TR_\omega \tilde{N}_{k}+\sum_{k=1}^{m}N_k^TR_\omega F_\omega[\tilde{A}]\tilde{N}_{k}-C^T\tilde{C}&=0,\nonumber\\
\tilde{A}^TS_\omega+S_\omega \tilde{A}+\sum_{k=1}^{m}\tilde{N}_{k}^TS_\omega F_\omega[\tilde{A}]\tilde{N}_{k}+\sum_{k=1}^{m}\tilde{N}_{k}^TF_\omega[\tilde{A}]^TS_\omega\tilde{N}_{k}+\tilde{C}^T\tilde{C}&=0.\nonumber
\end{align}
Also, if we define $S_1$, $S_2$, and $\hat{W}_i$ as
\begin{align}
S_1&=\sum_{k=1}^{m}\tilde{N}_{k}\tilde{P}_\omega \tilde{N}_{k}^TS_\omega+\tilde{B}\tilde{B}^TS_\omega,&
S_2&=\sum_{k=1}^{m}\tilde{N}_{k}\hat{P}_{\omega}^TN_k^TR_\omega+\tilde{B}B^TR_\omega,\nonumber\\
\hat{W}_i&=Real\Big[\frac{j}{\pi}L\big(-\tilde{A}-j\omega I, S_i\big)\Big]&&\nonumber
\end{align} in which $L(\cdot,\cdot)$ represents the Frech\'et derivative of the matrix logarithm \citep{higham2008functions}, the first-order optimality conditions for the $\mathcal{H}_{2,\omega}$-optimal MOR are given by
\begin{align}
&R_\omega^T\hat{P}_\omega+S_\omega\tilde{P}_\omega=\hat{W}_1^T+\hat{W}_2^T,\label{1.45}\\
&\sum_{k=1}^{m}\Big(\big(R_\omega^TF_\omega[A]+F_\omega[\tilde{A}]^TR_\omega^T\big)N_k\hat{P}_\omega+\big(F_\omega[\tilde{A}]^TS_\omega+S_\omega F_\omega[\tilde{A}]\big)\tilde{N}_k\tilde{P}_\omega\Big)=0,\\
&\hat{Q}_\omega^TB+\tilde{Q}_\omega\tilde{B}=0,\label{1.47}\\
&C\hat{P}_\omega-\tilde{C}\tilde{P}_\omega=0.\label{1.48}
\end{align}
In \citep{xu2017approach}, an iterative algorithm is presented that generates a ROM, which approximately satisfies the optimality conditions (\ref{1.45})-(\ref{1.48}). We refer to this algorithm as the frequency-limited $\mathcal{H}_2$-MOR algorithm (FLHMORA). Starting with an initial guess of the ROM $(\bar{A},\bar{N}_k,\bar{B},\bar{C})$, the reduction subspaces are updated in each iteration as $V=orth(V_\omega)$, $W=orth(W_\omega)$, and $W=W(V^TW)^{-1}$ until the algorithm converges where
\begin{align}
AV_\omega+V_\omega\bar{A}^T+\sum_{k=1}^{m}(F_\omega[A]N_kV_\omega\bar{N}_k^T&+N_kV_\omega\bar{N}_k^TF_\omega[\bar{A}]^T)\nonumber\\
&+F_\omega[A]B\bar{B}^T+B\bar{B}^TF_\omega[\bar{A}]^T=0,\nonumber\\ A^TW_\omega+W_\omega\bar{A}+\sum_{k=1}^{m}(F_\omega[A]^TN_k^TW_\omega\bar{N}_k&+N_k^TW_\omega\bar{N}_kF_\omega[\bar{A}])\nonumber\\
&-F_\omega[A]^TC^T\bar{C}-C^T\bar{C}F_\omega[\bar{A}]=0.\nonumber
\end{align}
\subsection{TLBT}
Let the time-limited Hankel singular values $\sigma_i$ of $\Sigma$ be defined as $\sigma_i=\sqrt{\lambda_i(P_\tau Q_\tau)}$ where $\lambda_i(\cdot)$ represents the eigenvalues. Heuristically, $\sigma_i$ is the quantitative measure of a state's contribution to the energy transfer within the desired time interval $[0,\tau]$ sec. In the TLBT \citep{shaker2014time}, the states with the negligible time-limited Hankel singular values are truncated. The reduction subspaces $V$ and $W$ are computed as $\tilde{P}_\tau=\tilde{Q}_\tau\approx W^TP_\tau W=V^TQ_\tau V=diag(\sigma_1,\cdots,\sigma_r)$ where $\sigma_1\geq\cdots\geq\sigma_r$ are the $r$ largest time-limited Hankel singular values of $\Sigma$, and $\tilde{P}_\tau$ and $\tilde{Q}_\tau$ are the time-limited controllability and time-limited observability gramians of $\tilde{\Sigma}$.
\section{Main Work}
In this section, we first formulate the definition of $\mathcal{H}_{2,\tau}$-norm for bilinear control systems. Then we derive first-order optimality conditions for the  $\mathcal{H}_{2,\tau}$-optimal MOR problem, i.e., conditions for the local optimum of $||\Sigma_e||^2_{\mathcal{H}_{2,\tau}}$. To ensure these conditions, an extension of the $\mathcal{H}_2$-optimal MOR algorithm (HOMORA) \citep{breiten2012interpolation} has been proposed for the time-limited MOR case that tends to satisfy the derived optimality conditions. The difficulty in enforcing the optimality conditions associated with $\tilde{A}$ and $\tilde{N}_k$ in the $\mathcal{H}_{2,\omega}$- and $\mathcal{H}_{2,\tau}$-optimal MOR are also discussed. Two new algorithms have been proposed to enforce the optimality conditions associated with $\tilde{B}$ and $\tilde{C}$ on the ROM for a fixed choice of $\tilde{A}$ and $\tilde{N}_k$. We end this section with a discussion on the computational aspects of the proposed algorithms.
\subsection{Time-limited $\mathcal{H}_2$-optimal MOR}
The $\mathcal{H}_2$-norm of $\Sigma$ quantifies the power of the output response $y(t)$ to the unit white noise input and is defined over the entire time horizon. If we are only interested in the power of the output response $y(t)$ within a finite time interval $[0,\tau]$ sec, we need to restrict the output response $y(t)$ within that interval. This results in a new norm, which we refer to as $\mathcal{H}_{2,\tau}$-norm. We now mathematically formulate the definition of $\mathcal{H}_{2,\tau}$-norm.
\begin{definition}
The time-limited $\mathcal{H}_2$-norm, i.e., $\mathcal{H}_{2,\tau}$-norm, of a bilinear system $\Sigma$ with a Hurwitz $A$-matrix within the time interval $[0,\tau]$ sec is defined as
\begin{align}
  ||\Sigma||_{\mathcal{H}_{2,\tau}}&=\sqrt{\begin{aligned}trace\Big(\sum_{i=1}^{\infty}\int_{0}^{\tau}\cdots\int_{0}^{\tau}\sum_{k_1,\cdots,k_i=1}^{m}&h_i^{(k_1,k_2,\cdots,k_i)}(t_1,t_2,\cdots,t_i)\\
  &\times\big(h_i^{(k_1,k_2,\cdots,k_i)}(t_1,t_2,\cdots,t_i)\big)^Tdt_1\cdots dt_i\Big)\end{aligned}}\nonumber\\
  &=\sqrt{\begin{aligned}trace\Big(\sum_{i=1}^{\infty}\int_{0}^{\tau}\cdots\int_{0}^{\tau}\sum_{k_1,\cdots,k_i=1}^{m}&\big(h_i^{(k_1,k_2,\cdots,k_i)}(t_1,t_2,\cdots,t_i)\big)^T\\
  &\times h_i^{(k_1,k_2,\cdots,k_i)}(t_1,t_2,\cdots,t_i)dt_1\cdots dt_i\Big).\end{aligned}}\nonumber
\end{align}
\end{definition}
Note that when $N_k=0$ for $k=1,2,\cdots,m$, this $\mathcal{H}_{2,\tau}$-norm reduces to the $\mathcal{H}_{2,\tau}$-norm for linear systems \citep{goyal2019time}. For linear systems, the $\mathcal{H}_{2,\tau}$-norm is related to the time-limited controllability and time-limited observability gramians \citep{goyal2019time}. In the following, we show a similar relation in the bilinear case.
\begin{theorem}
  If $P_\tau$ or $Q_\tau$ exists, the $\mathcal{H}_{2,\tau}$-norm of $\Sigma$ can be expressed in terms of $P_\tau$ or $Q_\tau$ by using
  \begin{align}
  ||\Sigma||_{\mathcal{H}_{2,\tau}}=\sqrt{trace(CP_\tau C^T)}=\sqrt{trace(B^T Q_\tau B)}.\nonumber
  \end{align}
\end{theorem}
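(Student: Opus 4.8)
The plan is to expand both candidate expressions in terms of the Volterra kernels $h_i^{(k_1,\dots,k_i)}$ and match them term by term with the definition of $\|\Sigma\|_{\mathcal{H}_{2,\tau}}$. First I would record the elementary remark, already noted in the excerpt, that since $\mathrm{trace}(MM^T)=\mathrm{trace}(M^TM)$ for any real matrix $M$, the two displays in the definition coincide; as each $h_i^{(k_1,\dots,k_i)}$ is a column vector in $\mathbb{R}^p$, both equal $\sum_{i=1}^{\infty}\int_0^\tau\!\cdots\!\int_0^\tau\sum_{k_1,\dots,k_i=1}^m \big\|h_i^{(k_1,\dots,k_i)}(t_1,\dots,t_i)\big\|_2^2\,dt_1\cdots dt_i$.

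Next I would establish the key structural identity by induction on $i$: the columns of $P_i$ (defined by $P_1=e^{At_1}B$, $P_i=e^{At_i}[\,N_1P_{i-1}\ \cdots\ N_mP_{i-1}\,]$) are exactly the vectors $e^{At_i}N_{k_1}e^{At_{i-1}}N_{k_2}\cdots N_{k_{i-1}}e^{At_1}b_{k_i}$ as $(k_1,\dots,k_i)$ ranges over $\{1,\dots,m\}^i$. The base case $i=1$ is immediate, and in the inductive step a column of $N_kP_{i-1}$ indexed by $(k_2,\dots,k_i)$ is $N_k\big(e^{At_{i-1}}N_{k_2}\cdots e^{At_1}b_{k_i}\big)$, so pre-multiplying the block row $[\,N_1P_{i-1}\ \cdots\ N_mP_{i-1}\,]$ by $e^{At_i}$ produces precisely the asserted list with new leading index $k_1=k$. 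Comparing with (\ref{1.2}), the columns of $CP_i$ are then the kernels $h_i^{(k_1,\dots,k_i)}(t_1,\dots,t_i)$ themselves, whence $CP_iP_i^TC^T=\sum_{k_1,\dots,k_i=1}^m h_i^{(k_1,\dots,k_i)}\big(h_i^{(k_1,\dots,k_i)}\big)^T$.

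I would then integrate this identity over $[0,\tau]^i$, sum over $i$, pull the constant matrices $C$ and $C^T$ outside the sum and the integrals (legitimate because $C$ is constant and, by hypothesis, $P_\tau$ — hence the underlying series/integral — converges), and invoke the definition of $P_\tau$ to obtain $CP_\tau C^T=\sum_{i}\int_0^\tau\!\cdots\!\int_0^\tau\sum_{k_1,\dots,k_i}h_i^{(k_1,\dots,k_i)}\big(h_i^{(k_1,\dots,k_i)}\big)^T$. Taking the trace and comparing with the reduction from the first paragraph yields $\|\Sigma\|_{\mathcal{H}_{2,\tau}}=\sqrt{\mathrm{trace}(CP_\tau C^T)}$. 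The second identity follows dually from the recursion for $Q_i$: an analogous induction shows that the entries of $Q_iB$ are precisely the scalar components of the kernels $h_i^{(k_1,\dots,k_i)}$ (up to a permutation of the time arguments), so $\mathrm{trace}\big((Q_iB)^T(Q_iB)\big)=\sum_{k_1,\dots,k_i}\big\|h_i^{(k_1,\dots,k_i)}(t_1,\dots,t_i)\big\|_2^2$; summing $\int_0^\tau\!\cdots\!\int_0^\tau B^TQ_i^TQ_iB$ over $i$ and using the definition of $Q_\tau$ gives $\mathrm{trace}(B^TQ_\tau B)=\|\Sigma\|_{\mathcal{H}_{2,\tau}}^2$.

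The step I would be most careful about is the bookkeeping of time variables: in the recursion for $Q_i$ the exponential factors appear in reverse order relative to the convention used for $h_i$ in (\ref{1.2}), so the entries of $Q_iB$ match the kernel components only after permuting the arguments $t_1,\dots,t_i$; this is harmless because each $t_\ell$ is integrated over the same interval $[0,\tau]$, and I would state it explicitly rather than leave it implicit. I would also note that $A$ being Hurwitz is not in fact needed for the finite-horizon integrals to be well defined, so the hypothesis ``$P_\tau$ or $Q_\tau$ exists'' is exactly what is required to justify interchanging the infinite sum over $i$ with the integrations and with multiplication by the constant matrices $C,C^T$ (respectively $B^T,B$).
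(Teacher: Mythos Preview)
Your proof is correct and follows essentially the same route as the paper's: both establish the term-by-term identity $CP_iP_i^TC^T=\sum_{k_1,\dots,k_i}h_i^{(k_1,\dots,k_i)}\big(h_i^{(k_1,\dots,k_i)}\big)^T$ via the recursion for $P_i$, then integrate over $[0,\tau]^i$ and sum. Your organization---identifying the columns of $CP_i$ directly with the kernels---is a slightly cleaner packaging of the same induction the paper carries out by repeatedly collapsing $\sum_k b_kb_k^T=BB^T$, and your explicit remark about the time-variable permutation in the $Q_\tau$ case is a point the paper simply waves away with ``duality''.
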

\begin{proof}
Considering the duality of $P_\tau$ and $Q_\tau$, we restrict ourselves in proving that $||\Sigma||_{\mathcal{H}_{2,\tau}}=\sqrt{trace(CP_\tau C^T)}$. Let $H_i$ and $P_{i,\tau}$ be defined as
\begin{align}
H_i&=\int_{0}^{\tau}\cdots\int_{0}^{\tau}\sum_{k_1,\cdots,k_i=1}^{m}h_i^{(k_1,k_2,\cdots,k_i)}(t_1,t_2,\cdots,t_i)\times\nonumber\\
&\hspace*{4cm}\big(h_i^{(k_1,k_2,\cdots,k_i)}(t_1,t_2,\cdots,t_i)\big)^Tdt_1\cdots dt_i,\nonumber\\
P_{i,\tau}&=\int_{0}^{\tau}\cdots\int_{0}^{\tau}P_iP_i^Tdt_1\cdots dt_i.\nonumber
\end{align}
Then, $||\Sigma||_{\mathcal{H}_{2,\tau}}=\sqrt{trace(\sum_{i=1}^{\infty}H_i)}$ and $P_\tau=\sum_{i=1}^{\infty}P_{i,\tau}$. Now for $i=1$,
\begin{align}
H_1&=\int_{0}^{\tau}\sum_{k_1=1}^{m}Ce^{At_1}b_{k_1}b_{k_1}^Te^{A^Tt_1}C^Tdt_1.\nonumber
\end{align} Since $\sum_{k_i=1}^{m}b_{k_i}b_{k_i}^T=BB^T$,
\begin{align}
H_1&=\int_{0}^{\tau}Ce^{At_1}BB^Te^{A^Tt_1}C^Tdt_1=CP_{1,\tau}C^T.\nonumber
\end{align}
Similarly, for $i=2$, we have
\begin{align}
H_2&=\int_{0}^{\tau}\int_{0}^{\tau}\sum_{k_1,k_2=1}^{m}Ce^{At_2}N_{k_1}e^{At_1}b_{k_2}b_{k_2}^Te^{A^Tt_1}N_{k_1}^Te^{At_2}C^Tdt_1dt_2\nonumber\\
&=\int_{0}^{\tau}\int_{0}^{\tau}\sum_{k_1=1}^{m}Ce^{At_2}N_{k_1}P_1P_1^TN_{k_1}^Te^{A^Tt_2}C^Tdt_1dt_2\nonumber\\
&=\int_{0}^{\tau}\int_{0}^{\tau}CP_2P_2^TC^Tdt_1dt_2=CP_{2,\tau}C^T.\nonumber
\end{align}
On similar lines, $H_i$ can be defined for $i=3,4,\cdots$ as the following
\begin{align}
H_i&=\int_{0}^{\tau}\cdots\int_{0}^{\tau}\sum_{k_1,\cdots,k_i=1}^{m}Ce^{At_i}\cdots e^{At_1}b_{k_i}b_{k_i}^Te^{A^Tt_1}\cdots e^{A^Tt_i}C^Tdt_1\cdots dt_i\nonumber\\
&=\int_{0}^{\tau}\cdots\int_{0}^{\tau}\sum_{k_1,\cdots,k_{i-1}=1}^{m}Ce^{At_i}N_{k_1}N_{k_{i-2}}e^{At_2}N_{k_{i-1}}P_1P_1^TN_{k_{i-1}}^Te^{A^Tt_2}\cdots \nonumber\\
&\hspace*{8cm}\cdots N_{k_1}^Te^{A^Tt_i}C^Tdt_1\cdots dt_i\nonumber\\
&=\int_{0}^{\tau}\cdots\int_{0}^{\tau}\sum_{k_1,\cdots,k_{i-2}=1}^{m}Ce^{At_i}N_{k_1}\cdots N_{k_{i-3}}e^{At_3}N_{k_{i-2}}P_2P_2^TN_{k_{i-2}}^Te^{A^Tt_3}\cdots \nonumber\\
&\hspace*{8cm}\cdots N_{k_1}^Te^{A^Tt_i}C^Tdt_1\cdots dt_i\nonumber\\
&=\int_{0}^{\tau}\cdots\int_{0}^{\tau}\sum_{k_1=1}^{m}Ce^{At_i}N_{k_1}P_{i-1}P_{i-1}^TN_{k_1}^Te^{A^Tt_i}C^Tdt_1\cdots dt_i\nonumber
\end{align}
\begin{align}
&=\int_{0}^{\tau}\cdots\int_{0}^{\tau}CP_{i}P_{i}^TC^Tdt_1\cdots dt_i=CP_{i,\tau}C^T.\nonumber
\end{align}
Thus $||\Sigma||_{\mathcal{H}_{2,\tau}}=\sqrt{trace(\sum_{i=1}^{\infty}H_i)}=\sqrt{trace(\sum_{i=1}^{\infty}CP_{i,\tau}C^T)}=\sqrt{trace(CP_{\tau}C^T)}$. This completes the proof.\end{proof}
Then the squared $\mathcal{H}_{2,\tau}$-norm of $\Sigma_e$ is given by
\begin{align}
||\Sigma_e||^2_{\mathcal{H}_{2,\tau}}=trace(CP_{e,\tau}C^T)=trace(B^TQ_{e,\tau}B)\nonumber
\end{align} where $P_{e,\tau}$ and $Q_{e,\tau}$ solve the following generalized Lyapunov equations
\begin{align}
A_eP_{e,\tau}+P_{e,\tau}A_e^T+\sum_{k=1}^{m}\big(N_{ek}P_{e,\tau}N_{ek}^T&-e^{A_e\tau}N_{ek}P_{e,\tau}N_{ek}^Te^{A_e^T\tau}\big)\nonumber\\
&+B_eB_e^T-e^{A_e\tau}B_eB_e^Te^{A_e^T\tau}=0,\nonumber\\
A_e^TQ_{e,\tau}+Q_{e,\tau}A_e+\sum_{k=1}^{m}\big(N_{ek}^TQ_{e,\tau}N_{ek}&-e^{A_e^T\tau}N_{ek}^TQ_{e,\tau}N_{ek}e^{A_e\tau}\big)\nonumber\\
&+C_e^TC_e-e^{A_e^T\tau}C_e^TC_ee^{A_e\tau}=0.\nonumber
\end{align}
The matrices $P_{e,\tau}$, $Q_{e,\tau}$, and $e^{A_e\tau}$ can be partitioned according to (\ref{1.23}) to get
\begin{align}
P_{e,\tau}=\begin{bmatrix}P_\tau&\hat{P}_\tau\\\hat{P}_\tau^T&\tilde{P}_\tau\end{bmatrix},\hspace*{4mm} Q_{e,\tau}=\begin{bmatrix}Q_\tau&\hat{Q}_\tau\\\hat{Q}_\tau^T&\tilde{Q}_\tau\end{bmatrix},\hspace*{4mm} \textnormal{and}\hspace*{4mm}e^{A_e\tau}=\begin{bmatrix}e^{A\tau}&0\\0&e^{\tilde{A}\tau}\end{bmatrix}\nonumber
\end{align} where $\tilde{P}_\tau$, $\hat{P}_\tau$, $\tilde{Q}_\tau$, and $\hat{Q}_\tau$ solve the following generalized matrix equations
\begin{align}
\tilde{A}\tilde{P}_\tau+\tilde{P}_\tau\tilde{A}^T+\sum_{k=1}^{m}\big(\tilde{N}_k\tilde{P}_\tau\tilde{N}_k^T&-e^{\tilde{A}\tau}\tilde{N}_k\tilde{P}_\tau\tilde{N}_k^Te^{\tilde{A}^T\tau}\big)\nonumber\\
&+\tilde{B}\tilde{B}^T-e^{\tilde{A}\tau}\tilde{B}\tilde{B}^Te^{\tilde{A}^T\tau}=0,\label{2.10}\\
A\hat{P}_\tau+\hat{P}_\tau\tilde{A}^T+\sum_{k=1}^{m}\big(N_k\hat{P}_\tau\tilde{N}_k^T&-e^{A\tau}N_k\hat{P}_\tau\tilde{N}_k^Te^{\tilde{A}^T\tau}\big)\nonumber\\
&+B\tilde{B}^T-e^{A\tau}B\tilde{B}^Te^{\tilde{A}^T\tau}=0,\label{2.11}\\
\tilde{A}^T\tilde{Q}_\tau+\tilde{Q}_\tau\tilde{A}+\sum_{k=1}^{m}\big(\tilde{N}_k^T\tilde{Q}_\tau\tilde{N}_k&-e^{\tilde{A}^T\tau}\tilde{N}_k^T\tilde{Q}_\tau\tilde{N}_ke^{\tilde{A}\tau}\big)\nonumber\\
&+\tilde{C}^T\tilde{C}-e^{\tilde{A}^T\tau}\tilde{C}^T\tilde{C}e^{\tilde{A}\tau}=0,\nonumber\\
A^T\hat{Q}_\tau+\hat{Q}_\tau\tilde{A}+\sum_{k=1}^{m}\big(N_k^T\hat{Q}_\tau\tilde{N}_k&-e^{A^T\tau}N_k^T\hat{Q}_\tau\tilde{N}_ke^{\tilde{A}\tau}\big)\nonumber\\
&-C^T\tilde{C}+e^{A^T\tau}C^T\tilde{C}e^{\tilde{A}\tau}=0.\nonumber
\end{align}
Accordingly, the squared $\mathcal{H}_{2,\tau}$-norm of $\Sigma_e$ can be expressed as
\begin{align}
||\Sigma_e||_{\mathcal{H}_{2,\tau}}^2&=trace(CP_\tau C^T-2C\hat{P}_\tau\tilde{C}^T+\tilde{C}\tilde{P}_\tau\tilde{C}^T)\nonumber\\
&=trace(B^TQ_\tau B+2B^T\hat{Q}_\tau\tilde{B}+\tilde{B}^T\tilde{Q}_\tau\tilde{B}).\nonumber
\end{align}
\begin{lemma}\label{lemma1}
Let $L$ and $Z$ solve the following generalized matrix equations
\begin{align}
AL+L\tilde{A}^T+\sum_{k=1}^{m}(N_kL\tilde{N}_k^T-e^{A\tau}N_kL\tilde{N}_k^Te^{\tilde{A}^T\tau})+O_1&=0,\nonumber\\
\tilde{A}^TZ+ZA+\sum_{k=1}^{m}(\tilde{N}_k^TZN_k-\tilde{N}_k^Te^{\tilde{A}^T\tau}Ze^{A\tau}N_k)+O_2&=0.\nonumber
\end{align}
Then $trace(O_1Z)=trace(O_2L)$.
\end{lemma}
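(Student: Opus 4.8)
The plan is to eliminate $O_1$ and $O_2$ from the two generalized matrix equations, substitute them into $\mathrm{trace}(O_1 Z)$ and $\mathrm{trace}(O_2 L)$ respectively, and then show the two resulting expressions coincide term by term using only the cyclic invariance of the trace. Concretely, from the first equation I would write
\begin{align}
O_1=-AL-L\tilde{A}^T-\sum_{k=1}^{m}\big(N_kL\tilde{N}_k^T-e^{A\tau}N_kL\tilde{N}_k^Te^{\tilde{A}^T\tau}\big),\nonumber
\end{align}
so that
\begin{align}
\mathrm{trace}(O_1Z)=-\mathrm{trace}(ALZ)-\mathrm{trace}(L\tilde{A}^TZ)-\sum_{k=1}^{m}\mathrm{trace}(N_kL\tilde{N}_k^TZ)+\sum_{k=1}^{m}\mathrm{trace}(e^{A\tau}N_kL\tilde{N}_k^Te^{\tilde{A}^T\tau}Z).\nonumber
\end{align}
Symmetrically, from the second equation,
\begin{align}
O_2=-\tilde{A}^TZ-ZA-\sum_{k=1}^{m}\big(\tilde{N}_k^TZN_k-\tilde{N}_k^Te^{\tilde{A}^T\tau}Ze^{A\tau}N_k\big),\nonumber
\end{align}
hence
\begin{align}
\mathrm{trace}(O_2L)=-\mathrm{trace}(\tilde{A}^TZL)-\mathrm{trace}(ZAL)-\sum_{k=1}^{m}\mathrm{trace}(\tilde{N}_k^TZN_kL)+\sum_{k=1}^{m}\mathrm{trace}(\tilde{N}_k^Te^{\tilde{A}^T\tau}Ze^{A\tau}N_kL).\nonumber
\end{align}

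Next I would match the four groups of terms one at a time using $\mathrm{trace}(XY)=\mathrm{trace}(YX)$ and, more generally, invariance under cyclic permutation of the factors. The first two terms pair up immediately: $\mathrm{trace}(ALZ)=\mathrm{trace}(ZAL)$ and $\mathrm{trace}(L\tilde{A}^TZ)=\mathrm{trace}(\tilde{A}^TZL)$. For the bilinear terms, cycling $N_kL\tilde{N}_k^TZ$ gives $\tilde{N}_k^TZN_kL$, so the $k$-th summands agree; and cycling $e^{A\tau}N_kL\tilde{N}_k^Te^{\tilde{A}^T\tau}Z$ gives $\tilde{N}_k^Te^{\tilde{A}^T\tau}Ze^{A\tau}N_kL$, so the exponentially weighted summands agree as well. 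Summing over $k$ and collecting the four matched groups yields $\mathrm{trace}(O_1Z)=\mathrm{trace}(O_2L)$, which is the claim.

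There is no genuine analytical obstacle here; the argument is a bookkeeping exercise. The only place requiring mild care is the last (exponentially weighted) term, where one must verify that the blocks $e^{A\tau}$ and $e^{\tilde{A}^T\tau}$ sit in the right positions for the cyclic permutation to land exactly on the term appearing in $\mathrm{trace}(O_2L)$; this is guaranteed because the two defining equations were written with their nonlocal terms arranged dually ($e^{A\tau}(\cdot)e^{\tilde{A}^T\tau}$ in the $L$-equation versus $\tilde{N}_k^Te^{\tilde{A}^T\tau}(\cdot)e^{A\tau}N_k$ in the $Z$-equation), which is precisely the structure inherited from the partitioned Lyapunov equations for $P_{e,\tau}$ and $Q_{e,\tau}$. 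Note also that no uniqueness or solvability assumption on $L$, $Z$ is needed: the identity holds for any pair satisfying the two equations.
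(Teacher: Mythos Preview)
Your proposal is correct and follows essentially the same approach as the paper: both eliminate $O_1$ (and, in your case, also $O_2$) from the defining equations and then use only the cyclic invariance of the trace to match the four groups of terms. The paper proceeds in one direction, rewriting $\mathrm{trace}(O_1Z)$ until it equals $\mathrm{trace}(O_2L)$, whereas you expand both sides and pair them term by term, but this is a purely cosmetic difference.
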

\begin{proof}We use three main properties of trace:\\
(a) Transpose: $trace(F_1F_2F_3)=trace(F_3^TF_2^TF_1^T)$.\\
(b) Cyclic permutation: $trace(F_1F_2F_3)=trace(F_3F_1F_2)=trace(F_2F_3F_1)$.\\
(c) Addition: $trace(F_1+F_2+F_3)=trace(F_1)+trace(F_2)+trace(F_3).$

Now \begin{align}
trace(O_1Z)&=-trace\big(ALZ+L\tilde{A}^TZ+\sum_{k=1}^{m}(N_kL\tilde{N}_k^TZ-e^{A\tau}N_kL\tilde{N}_k^Te^{\tilde{A}^T\tau}Z)\big)\nonumber\\
&=-trace\Big(\big(ZA+\tilde{A}^TZ+\sum_{k=1}^{m}(\tilde{N}_k^TZN_k-\tilde{N}_k^Te^{\tilde{A}^T\tau}Ze^{A\tau}N_k)\big)L\Big)\nonumber\\
&=trace(O_2L).\nonumber
\end{align} A special case of this lemma is when $A$ is replaced with $\tilde{A}$ as no specific dimension of $A$ is assumed.
\end{proof}
\begin{theorem}
Let $A$ and $\tilde{A}$ be Hurwitz, and $P_\tau$, $\tilde{P}_\tau$, $Q_\tau$, and $\tilde{Q}_\tau$ exist. Let $R_\tau$ and $S_\tau$ solve the following generalized matrix equations
\begin{align}
A^TR_\tau+R_\tau \tilde{A}+\sum_{k=1}^{m}\big(N_k^TR_\tau \tilde{N}_{k}-N_k^Te^{A^T\tau}R_\tau e^{\tilde{A}\tau}\tilde{N}_{k}\big)-C^T\tilde{C}&=0,\label{3.10}\\
\tilde{A}^TS_\tau+S_\tau \tilde{A}+\sum_{k=1}^{m}\big(\tilde{N}_{k}^TS_\tau\tilde{N}_{k}-\tilde{N}_{k}^Te^{\tilde{A}^T\tau}S_\tau e^{\tilde{A}\tau}\tilde{N}_{k}\big)+\tilde{C}^T\tilde{C}&=0.\label{3.11}
\end{align} Then the partial derivatives of the cost function $J(\tilde{A},\tilde{N}_k,\tilde{B},\tilde{C})=||\Sigma_e||_{\mathcal{H}_{2,\tau}}^2$ with respect to $\tilde{A}$, $\tilde{N}_k$, $\tilde{B}$, and $\tilde{C}$ are given by
\begin{align}
\frac{\partial J}{\partial\tilde{A}}&=2(R_\tau^T\hat{P}_\tau+S_\tau\tilde{P}_\tau-Y_\tau),\nonumber\\
\frac{\partial J}{\partial\tilde{N}_k}&=2\sum_{k=1}^{m}(R_\tau^TN_k\hat{P}_\tau+S_\tau\tilde{N}_k\tilde{P}_\tau-Z_\tau\big),\nonumber\\
\frac{\partial J}{\partial\tilde{B}}&=2(\hat{Q}_\tau^TB+\tilde{Q}_\tau\tilde{B}),\nonumber\\
\frac{\partial J}{\partial\tilde{C}}&=2(-C\hat{P}_\tau+\tilde{C}\tilde{P}_\tau),\nonumber
\end{align}
where
\begin{align}
Y_\tau&=\tau( R_\tau^Te^{A\tau}B\tilde{B}^Te^{\tilde{A}^T\tau}+S_\tau e^{\tilde{A}\tau}\tilde{B}\tilde{B}^Te^{\tilde{A}^T\tau})+\tau\sum_{k=1}^{m}(R_\tau^Te^{A\tau}N_k\hat{P}_\tau\tilde{N}_k^Te^{\tilde{A}^T\tau}\nonumber\\
&\hspace*{9.5cm}+S_\tau e^{\tilde{A}\tau}\tilde{N}_k\tilde{P}_\tau\tilde{N}_k^Te^{\tilde{A}^T\tau}),\nonumber\\
Z_\tau&=e^{\tilde{A}^T\tau}R_\tau^Te^{A\tau}N_k\hat{P}_\tau+e^{\tilde{A}^T\tau}S_\tau e^{\tilde{A}\tau}\tilde{N}_k\tilde{P}_\tau.\nonumber
\end{align}
\end{theorem}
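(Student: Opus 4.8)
The plan is to differentiate the cost $J = \operatorname{trace}(CP_\tau C^T - 2C\hat P_\tau \tilde C^T + \tilde C \tilde P_\tau \tilde C^T)$ directly, treating $\tilde A$, $\tilde N_k$, $\tilde B$, $\tilde C$ as independent matrix variables, and using the Lyapunov/Sylvester equations (\ref{2.10})--(\ref{2.11}) that $\tilde P_\tau$ and $\hat P_\tau$ satisfy to eliminate the derivatives $\partial \tilde P_\tau/\partial(\cdot)$ and $\partial\hat P_\tau/\partial(\cdot)$ in favour of the adjoint variables $R_\tau$ and $S_\tau$. Since $CP_\tau C^T$ does not depend on the reduced quantities, $J$ splits into $-2\operatorname{trace}(C\hat P_\tau \tilde C^T) + \operatorname{trace}(\tilde C \tilde P_\tau \tilde C^T)$, and the whole computation is the bilinear time-limited analogue of the linear Wilson-type derivation in \citep{goyal2019time}, with Lemma~\ref{lemma1} playing the role of the ``$\operatorname{trace}(O_1 Z) = \operatorname{trace}(O_2 L)$'' exchange identity that lets one move the unknown perturbation matrices off the implicitly defined gramians.

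First I would handle the two ``easy'' derivatives, $\partial J/\partial \tilde C$ and $\partial J/\partial \tilde B$. For $\tilde C$: differentiating $-2\operatorname{trace}(C\hat P_\tau \tilde C^T) + \operatorname{trace}(\tilde C\tilde P_\tau\tilde C^T)$ with respect to $\tilde C$, the explicit $\tilde C$-dependence gives $-2C\hat P_\tau + 2\tilde C\tilde P_\tau$; the implicit dependence of $\hat P_\tau$ and $\tilde P_\tau$ on $\tilde C$ is nil because (\ref{2.10})--(\ref{2.11}) do not contain $\tilde C$, so $\partial J/\partial\tilde C = 2(-C\hat P_\tau + \tilde C\tilde P_\tau)$ immediately. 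For $\tilde B$ it is cleanest to switch to the dual representation $J = \operatorname{trace}(B^TQ_\tau B + 2B^T\hat Q_\tau \tilde B + \tilde B^T\tilde Q_\tau\tilde B)$; since the equations for $\hat Q_\tau$, $\tilde Q_\tau$ do not involve $\tilde B$, one reads off $\partial J/\partial\tilde B = 2(\hat Q_\tau^T B + \tilde Q_\tau\tilde B)$.

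Next, the two hard derivatives, $\partial J/\partial\tilde A$ and $\partial J/\partial\tilde N_k$. Writing $\delta J = -2\operatorname{trace}(C\,\delta\hat P_\tau\,\tilde C^T) + 2\operatorname{trace}(\tilde C\,\delta\tilde P_\tau\,\tilde C^T)$ for a perturbation $\delta\tilde A$ (resp.\ $\delta\tilde N_k$), I would perturb equations (\ref{2.10}) and (\ref{2.11}) to get Lyapunov/Sylvester equations for $\delta\tilde P_\tau$ and $\delta\hat P_\tau$ whose right-hand sides are linear in $\delta\tilde A$ and in $\delta(e^{\tilde A\tau})$. I would then recognize $\tilde C^T\tilde C = S_\tau$-equation data and $C^T\tilde C = R_\tau$-equation data: using Lemma~\ref{lemma1} (with its stated special case $A\to\tilde A$), $\operatorname{trace}(\tilde C\,\delta\tilde P_\tau\,\tilde C^T)$ is exchanged against the $S_\tau$-equation and $\operatorname{trace}(C\,\delta\hat P_\tau\,\tilde C^T)$ against the $R_\tau$-equation, converting $\delta J$ into $\operatorname{trace}$ of $\delta\tilde A$ (or $\delta\tilde N_k$) contracted with expressions built from $R_\tau,S_\tau,\hat P_\tau,\tilde P_\tau$ and the exponential factors $e^{A\tau},e^{\tilde A\tau}$. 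Collecting the terms that come from the ``ordinary'' parts of (\ref{2.10})--(\ref{2.11}) yields $R_\tau^T\hat P_\tau + S_\tau\tilde P_\tau$ (resp.\ $\sum_k(R_\tau^T N_k\hat P_\tau + S_\tau\tilde N_k\tilde P_\tau)$), while the terms coming from the exponential-correction parts yield $-Y_\tau$ (resp.\ $-Z_\tau$).

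The main obstacle is the Fréchet derivative of the matrix exponential: $\delta(e^{\tilde A\tau}) = \int_0^\tau e^{\tilde A(\tau-s)}\,\delta\tilde A\,e^{\tilde As}\,ds$ when $\delta\tilde A$ is the perturbation, which is not simply $\tau e^{\tilde A\tau}\delta\tilde A$. The trick — exactly as in \citep{goyal2019time} — is that inside the relevant traces this integral appears sandwiched so that one factor gets absorbed into the ($\tau$-independent) solution of the associated Lyapunov equation, and the $s$-integral collapses, leaving the clean factor $\tau$ seen in the definition of $Y_\tau$. Concretely, the correction terms $-e^{A\tau}N_k L\tilde N_k^T e^{\tilde A^T\tau}$ in Lemma~\ref{lemma1} are tailored so that after the trace exchange the exponential-derivative contribution telescopes to $\tau$ times a boundary term; verifying that this telescoping is exactly compatible with the structure of (\ref{2.10})--(\ref{2.11}) (for both the $BB^T$-type and the $N_k(\cdot)N_k^T$-type terms) is the delicate bookkeeping step. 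Once that is in place, for $\tilde N_k$ one notes that $\tilde N_k$ appears in (\ref{2.10})--(\ref{2.11}) only through the algebraic terms $\tilde N_k(\cdot)\tilde N_k^T$ and $e^{\tilde A\tau}$ is $\tilde N_k$-independent, so no exponential-derivative term arises in $\partial J/\partial\tilde N_k$ beyond what is packaged into $Z_\tau$ via the $e^{\tilde A^T\tau}(\cdot)e^{\tilde A\tau}$ factors already present in the $R_\tau,S_\tau$ equations; assembling these gives the stated $Z_\tau$.
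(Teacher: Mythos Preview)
Your overall strategy---perturb the gramian equations (\ref{2.10})--(\ref{2.11}), write $\delta J=-2\operatorname{trace}(\tilde C^TC\,\delta\hat P_\tau)+\operatorname{trace}(\tilde C^T\tilde C\,\delta\tilde P_\tau)$, and use Lemma~\ref{lemma1} against the $R_\tau$- and $S_\tau$-equations to trade $\delta\hat P_\tau,\delta\tilde P_\tau$ for explicit terms in $\delta\tilde A$ (or $\delta\tilde N_k$)---is exactly the paper's route, and your treatment of $\partial J/\partial\tilde B$, $\partial J/\partial\tilde C$ matches the paper verbatim.

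The one discrepancy is your handling of $\delta(e^{\tilde A\tau})$. The paper does \emph{not} use the full Fr\'echet-derivative integral $\int_0^\tau e^{\tilde A(\tau-s)}\,\delta\tilde A\,e^{\tilde As}\,ds$ and then collapse it via a telescoping/Lyapunov-absorption trick; it simply takes the scalar-like differential $\delta(e^{\tilde A\tau})=\tau\,\Delta_{\tilde A}\,e^{\tilde A\tau}$ (and $\delta(e^{\tilde A^T\tau})=\tau\,e^{\tilde A^T\tau}\Delta_{\tilde A}^T$), so that the perturbation of $e^{\tilde A\tau}Xe^{\tilde A^T\tau}$ is recorded directly as $\tau\Delta_{\tilde A}e^{\tilde A\tau}Xe^{\tilde A^T\tau}+\tau e^{\tilde A\tau}Xe^{\tilde A^T\tau}\Delta_{\tilde A}^T$ inside $M_1,M_2$. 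That is where the clean factor $\tau$ in $Y_\tau$ comes from. Your proposed ``trick'' that the integral form collapses inside the trace after absorption into a Lyapunov solution does not actually go through in general (the integrand $e^{\tilde As}YXe^{\tilde A(\tau-s)}$ is not $s$-independent unless $YX$ commutes with $\tilde A$), so if you try to push that line you will not recover the stated $Y_\tau$. To reproduce the paper's result you should adopt the same commutative simplification of the exponential derivative that the paper (and \citep{goyal2019time}) uses; a rigorous treatment with the true Fr\'echet derivative leads instead to conditions of the Sinani--Gugercin type, with $Y_\tau$ expressed through the Fr\'echet derivative rather than the bare factor $\tau$.
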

\begin{proof}Let us denote the first-order derivative of $J$, $\tilde{P}_\tau$, and $\hat{P}_\tau$ with respect to $\tilde{A}$ as $\Delta_{J}^{\tilde{A}}$, $\Delta_{\tilde{P}_\tau}^{\tilde{A}}$, and $\Delta_{\hat{P}_\tau}^{\tilde{A}}$, respectively, and differential of $\tilde{A}$ as $\Delta_{\tilde{A}}$. It can be noticed by taking differentiation of the equations (\ref{2.10}) and (\ref{2.11}) that $\Delta_{\tilde{P}_\tau}^{\tilde{A}}$ and $\Delta_{\hat{P}_\tau}^{\tilde{A}}$ are related to $\Delta_{\tilde{A}}$ as
\begin{align}
\tilde{A}\Delta_{\tilde{P}_\tau}^{\tilde{A}}+\Delta_{\tilde{P}_\tau}^{\tilde{A}}\tilde{A}^T+\sum_{k=1}^{m}(\tilde{N}_k\Delta_{\tilde{P}_\tau}^{\tilde{A}}\tilde{N}_k^T-e^{\tilde{A}\tau}\tilde{N}_k\Delta_{\tilde{P}_\tau}^{\tilde{A}}\tilde{N}_k^Te^{\tilde{A}^T\tau})+M_1&=0,\label{3.12}\\
A\Delta_{\hat{P}_\tau}^{\tilde{A}}+\Delta_{\tilde{P}_\tau}^{\tilde{A}}\tilde{A}^T+\sum_{k=1}^{m}(N_k\Delta_{\hat{P}_\tau}^{\tilde{A}}\tilde{N}_k^T-e^{A\tau}N_k\Delta_{\hat{P}_\tau}^{\tilde{A}}\tilde{N}_k^Te^{\tilde{A}^T\tau})+M_2&=0\label{3.13}
\end{align}
where
\begin{align}
M_1&=\Delta_{\tilde{A}}\tilde{P}_\tau+\tilde{P}_\tau\Delta_{\tilde{A}}^T-\tau\sum_{k=1}^{m}(\Delta_{\tilde{A}}e^{\tilde{A}\tau}\tilde{N}_k\tilde{P}_{\tau}\tilde{N}_k^Te^{\tilde{A}^T\tau}+e^{\tilde{A}\tau}\tilde{N}_k\tilde{P}_\tau\tilde{N}_k^Te^{\tilde{A}^T\tau}\Delta_{\tilde{A}}^T)\nonumber\\
&\hspace*{5cm}-\tau(\Delta_{\tilde{A}}e^{\tilde{A}\tau}\tilde{B}\tilde{B}^Te^{\tilde{A}^T\tau}+e^{\tilde{A}\tau}\tilde{B}\tilde{B}^Te^{\tilde{A}^T\tau}\Delta_{\tilde{A}}^T),\nonumber\\
M_2&=\hat{P}_\tau\Delta_{\tilde{A}}^T-\tau\sum_{k=1}^{m}e^{A\tau}N_k\hat{P}_\tau\tilde{N}_k^Te^{\tilde{A}^T\tau}\Delta_{\tilde{A}}^T-\tau e^{A\tau}B\tilde{B}^Te^{\tilde{A}^T\tau}\Delta_{\tilde{A}}^T.\nonumber
\end{align}
Now
\begin{align}
\Delta_{J}^{\tilde{A}}=trace(-2C\Delta_{\hat{P}_\tau}^{\tilde{A}}\tilde{C}^T+\tilde{C}\Delta_{\tilde{P}_\tau}^{\tilde{A}}\tilde{C}^T)=trace(-2\tilde{C}^TC\Delta_{\hat{P}_\tau}^{\tilde{A}}+\tilde{C}^T\tilde{C}\Delta_{\tilde{P}_\tau}^{\tilde{A}}).\nonumber
\end{align}
By applying Lemma \ref{lemma1} on (\ref{3.10}) and (\ref{3.13}), and on (\ref{3.11}) and (\ref{3.12}), we get
\begin{align}
\Delta_{J}^{\tilde{A}}&=trace(2M_2R_\tau^T)+trace(M_1S_\tau)\nonumber\\
\Delta_{J}^{\tilde{A}}&=2trace\Big(\hat{P}_\tau\Delta_{\tilde{A}}^TR_\tau^T-\tau\sum_{k=1}^{m}e^{A\tau}N_k\hat{P}_\tau\tilde{N}_k^Te^{\tilde{A}^T\tau}\Delta_{\tilde{A}}^TR_\tau^T-\tau e^{A\tau}B\tilde{B}^Te^{\tilde{A}^T\tau}\Delta_{\tilde{A}}^TR_\tau^T\Big)\nonumber\\
&+trace\Big(\Delta_{\tilde{A}}\tilde{P}_\tau S_\tau+\tilde{P}_\tau\Delta_{\tilde{A}}^TS_\tau-\tau\sum_{k=1}^{m}(\Delta_{\tilde{A}}e^{\tilde{A}\tau}\tilde{N}_k\tilde{P}_{\tau}\tilde{N}_k^Te^{\tilde{A}^T\tau}S_\tau\nonumber\\
&+e^{\tilde{A}\tau}\tilde{N}_k\tilde{P}_\tau\tilde{N}_k^Te^{\tilde{A}^T\tau}\Delta_{\tilde{A}}^TS_\tau)-\tau(\Delta_{\tilde{A}}e^{\tilde{A}\tau}\tilde{B}\tilde{B}^Te^{\tilde{A}^T\tau}S_\tau+e^{\tilde{A}\tau}\tilde{B}\tilde{B}^Te^{\tilde{A}^T\tau}\Delta_{\tilde{A}}^TS_\tau)\Big)\nonumber\\
\Delta_{J}^{\tilde{A}}&=2trace\Big(\Delta_{\tilde{A}}^TR_\tau^T\hat{P}_\tau-\tau\sum_{k=1}^{m}\Delta_{\tilde{A}}^TR_\tau^Te^{A\tau}N_k\hat{P}_\tau\tilde{N}_k^Te^{\tilde{A}^T\tau}-\tau\Delta_{\tilde{A}}^TR_\tau^T e^{A\tau}B\tilde{B}^Te^{\tilde{A}^T\tau}\Big)\nonumber\\
&+trace\Big(\Delta_{\tilde{A}}^TS_\tau\tilde{P}_\tau +\Delta_{\tilde{A}}^TS_\tau\tilde{P}_\tau-\tau\sum_{k=1}^{m}(\Delta_{\tilde{A}}^TS_\tau e^{\tilde{A}\tau}\tilde{N}_k \tilde{P}_{\tau}\tilde{N}_k^Te^{\tilde{A}^T\tau}\nonumber\\
&+\Delta_{\tilde{A}}^TS_\tau e^{\tilde{A}\tau}\tilde{N}_k\tilde{P}_\tau\tilde{N}_k^Te^{\tilde{A}^T\tau})-\tau(\Delta_{\tilde{A}}^TS_\tau e^{\tilde{A}\tau} \tilde{B}\tilde{B}^T e^{\tilde{A}^T\tau}+\Delta_{\tilde{A}}^TS_\tau e^{\tilde{A}\tau}\tilde{B}\tilde{B}^Te^{\tilde{A}^T\tau})\Big)\nonumber\\
\Delta_{J}^{\tilde{A}}&=trace\Big(2\big(R_\tau^T\hat{P}_\tau+S_\tau\tilde{P}_\tau-Y_\tau\big)^T\Delta_{\tilde{A}})\Big).\nonumber\end{align}
Since $\Delta_{J}^{\tilde{A}}=trace\big((\frac{\partial J}{\partial \tilde{A}})^T\Delta_{\tilde{A}}\big)$ \citep{xu2017approach}, $\frac{\partial J}{\partial \tilde{A}}=2(R_\tau^T\hat{P}_\tau+S_\tau\tilde{P}_\tau-Y_\tau)$.

Now suppose that the first-order derivative of $J$, $\tilde{P}_\tau$, and $\hat{P}_\tau$ with respect to $\tilde{N}_k$ is represented by $\Delta_{J}^{\tilde{N}_k}$, $\Delta_{\tilde{P}_\tau}^{\tilde{N}_k}$, and $\Delta_{\hat{P}_\tau}^{\tilde{N}_k}$, respectively, and differential of $\tilde{N}_k$ by $\Delta_{\tilde{N}_k}$. Notice that by taking differentiation of (\ref{2.10}) and (\ref{2.11}), $\Delta_{\tilde{P}_\tau}^{\tilde{N}_k}$ and $\Delta_{\hat{P}_\tau}^{\tilde{N}_k}$ can be related to $\Delta_{\tilde{N}_k}$ by using
\begin{align}
\tilde{A}\Delta_{\tilde{P}_\tau}^{\tilde{N}_k}+\Delta_{\tilde{P}_\tau}^{\tilde{N}_k}\tilde{A}^T+\sum_{k=1}^{m}(\tilde{N}_k\Delta_{\tilde{P}_\tau}^{\tilde{N}_k}\tilde{N}_k^T-e^{\tilde{A}\tau}\tilde{N}_k\Delta_{\tilde{P}_\tau}^{\tilde{N}_k}\tilde{N}_k^Te^{\tilde{A}^T\tau})+M_3&=0,\label{3.14}\\
A\Delta_{\hat{P}_\tau}^{\tilde{N}_k}+\Delta_{\hat{P}_\tau}^{\tilde{N}_k}\tilde{A}^T+\sum_{k=1}^{m}(N_k\Delta_{\hat{P}_\tau}^{\tilde{N}_k}\tilde{N}_k^T-e^{A\tau}N_k\Delta_{\hat{P}_\tau}^{\tilde{N}_k}\tilde{N}_k^Te^{\tilde{A}^T\tau})+M_4&=0\label{3.15}
\end{align}
where
\begin{align}
M_3&=\sum_{k=1}^{m}(\tilde{N}_k\tilde{P}_\tau\Delta_{\tilde{N}_k}^T+\Delta_{\tilde{N}_k}\tilde{P}_\tau\tilde{N}_k^T-e^{\tilde{A}\tau}\tilde{N}_k\tilde{P}_\tau\Delta_{\tilde{N}_k}^Te^{\tilde{A}^T\tau}-e^{\tilde{A}\tau}\Delta_{\tilde{N}_k}\tilde{P}_\tau\tilde{N}_k^Te^{\tilde{A}^T\tau}),\nonumber\\
M_4&=\sum_{k=1}^{m}(N_k\hat{P}_\tau\Delta_{\tilde{N}_k}^T-e^{A\tau}N_k\hat{P}_\tau\Delta_{\tilde{N}_k}^Te^{\tilde{A}\tau}).\nonumber
\end{align}
Now
\begin{align}
\Delta_{J}^{\tilde{N}_k}=trace(-2C\Delta_{\hat{P}_\tau}^{\tilde{N}_k}\tilde{C}^T+\tilde{C}\Delta_{\tilde{P}_\tau}^{\tilde{N}_k}\tilde{C}^T)=trace(-2\tilde{C}^TC\Delta_{\hat{P}_\tau}^{\tilde{N}_k}+\tilde{C}^T\tilde{C}\Delta_{\tilde{P}_\tau}^{\tilde{N}_k}).\nonumber
\end{align}
By applying Lemma \ref{lemma1} on (\ref{3.10}) and (\ref{3.15}), and on (\ref{3.11}) and (\ref{3.14}), we get
\begin{align}
\Delta_{J}^{\tilde{N}_k}&=trace(2M_4R_\tau^T)+trace(M_3S_\tau)\nonumber\\
\Delta_{J}^{\tilde{N}_k}&=2trace\big(\sum_{k=1}^{m}(N_k\hat{P}_\tau\Delta_{\tilde{N}_k}^TR_\tau^T-e^{A\tau}N_k\hat{P}_\tau\Delta_{\tilde{N}_k}^Te^{\tilde{A}\tau}R_\tau^T)\big)\nonumber\\
&+trace\big(\sum_{k=1}^{m}(\tilde{N}_k\tilde{P}_\tau\Delta_{\tilde{N}_k}^TS_\tau+\Delta_{\tilde{N}_k}\tilde{P}_\tau\tilde{N}_k^TS_\tau-e^{\tilde{A}\tau}\tilde{N}_k\tilde{P}_\tau\Delta_{\tilde{N}_k}^Te^{\tilde{A}^T\tau}S_\tau\nonumber\\
&\hspace*{8cm}-e^{\tilde{A}\tau}\Delta_{\tilde{N}_k}\tilde{P}_\tau\tilde{N}_k^Te^{\tilde{A}^T\tau}S_\tau)\big)\nonumber\\
\Delta_{J}^{\tilde{N}_k}&=2trace\big(\sum_{k=1}^{m}(\Delta_{\tilde{N}_k}^TR_\tau^TN_k\hat{P}_\tau-\Delta_{\tilde{N}_k}^Te^{\tilde{A}\tau}R_\tau^Te^{A\tau}N_k\hat{P}_\tau)\big)\nonumber\\
&+trace\big(\sum_{k=1}^{m}(\Delta_{\tilde{N}_k}^TS_\tau\tilde{N}_k\tilde{P}_\tau+\Delta_{\tilde{N}_k}^TS_\tau\tilde{N}_k\tilde{P}_\tau\nonumber\\
&\hspace*{4cm}-\Delta_{\tilde{N}_k}^Te^{\tilde{A}^T\tau}S_\tau e^{\tilde{A}\tau}\tilde{N}_k\tilde{P}_\tau-\Delta_{\tilde{N}_k}^Te^{\tilde{A}^T\tau}S_\tau e^{\tilde{A}\tau}\tilde{N}_k\tilde{P}_\tau)\big)\nonumber\\
\Delta_{J}^{\tilde{N}_k}&=trace\Big(\big(2\sum_{k=1}^{m}R_\tau^TN_k\hat{P}_\tau+S_\tau\tilde{N}_k\tilde{P}_\tau-Z_\tau\big)^T\Delta_{\tilde{N}_k}\Big)\nonumber
\end{align}
Since $\Delta_{J}^{\tilde{N}_k}=trace\big((\frac{\partial J}{\partial \tilde{N}_k})^T\Delta_{\tilde{N}_k}\big)$, $\frac{\partial J}{\partial \tilde{N}_k}=2(\sum_{k=1}^{m}R_\tau^TN_k\hat{P}_\tau+S_\tau\tilde{N}_k\tilde{P}_\tau-Z_\tau)$.

Also, let us denote the first-order derivative of $J$ with respect to $\tilde{B}$ as $\Delta_{J}^{\tilde{B}}$, and let $\Delta_{\tilde{B}}$ be the differential of $\tilde{B}$. Then
\begin{align}
\Delta_{J}^{\tilde{B}}&=trace(2B^T\hat{Q}_\tau\Delta_{\tilde{B}}+\tilde{B}^T\tilde{Q}_\tau\Delta_{\tilde{B}}+\Delta_{\tilde{B}}^T\tilde{Q}_\tau\tilde{B})\nonumber\\
\Delta_{J}^{\tilde{B}}&=trace\big(2(\hat{Q}_\tau^TB+\tilde{Q}_\tau\tilde{B})^T\Delta_{\tilde{B}}\big).\nonumber
\end{align}
Since $\Delta_{J}^{\tilde{B}}=trace\big((\frac{\partial J}{\partial \tilde{B}})^T\Delta_{\tilde{B}}\big)$, $\frac{\partial J}{\partial \tilde{B}}=2(\hat{Q}_\tau^TB+\tilde{Q}_\tau\tilde{B})$.

Finally, if the first-order derivative of $J$ with respect to $\tilde{C}$ is defined as as $\Delta_{J}^{\tilde{C}}$, and $\Delta_{\tilde{C}}$ is the differential of $\tilde{C}$,
\begin{align}
\Delta_{J}^{\tilde{C}}&=trace(-2C\hat{P}_\tau\Delta_{\tilde{C}}^T+\tilde{C}\tilde{P}_\tau\Delta_{\tilde{C}^T}+\Delta_{\tilde{C}}\tilde{P}_\tau\tilde{C}^T)\nonumber\\
\Delta_{J}^{\tilde{C}}&=trace\big(-2(C\hat{P}_\tau+\tilde{C}\tilde{P}_\tau)^T\Delta_{\tilde{C}}\big).\nonumber
\end{align}
Since $\Delta_{J}^{\tilde{C}}=trace\big((\frac{\partial J}{\partial \tilde{C}})^T\Delta_{\tilde{C}}\big)$, $\frac{\partial J}{\partial \tilde{C}}=2(-C\hat{P}_\tau+\tilde{C}\tilde{P}_\tau)$. This completes the proof.\end{proof}
Thus the first-order optimality conditions for the local optimum of $||\Sigma_e||^2_{\mathcal{H}_{2,\tau}}$ can be defined as
\begin{align}
R_\tau^T\hat{P}_\tau+S_\tau\tilde{P}_\tau &=Y_\tau,\label{2.16}\\
\sum_{k=1}^{m}(R_\tau^TN_k\hat{P}_\tau+S_\tau\tilde{N}_k\tilde{P}_\tau)&=\sum_{k=1}^{m}Z_\tau,\\
\hat{Q}_\tau^TB+\tilde{Q}_\tau\tilde{B}&=0,\label{2.18}\\
C\hat{P}_\tau-\tilde{C}\tilde{P}_\tau&=0.\label{2.19}
\end{align}

Next, we present an algorithm (similar to its frequency domain counterpart, i.e., the FLHMORA) that generates a ROM, which approximately satisfies the optimality conditions (\ref{2.16})-(\ref{2.19}). We refer to Algorithm \ref{alg1} as the time-limited $\mathcal{H}_2$-MOR algorithm (TLHMORA).
\begin{algorithm}[!h]
  \caption{TLHMORA}\label{alg1}
\textbf{\textit{Input:}} Original system : $(A,N_k,B,C)$; initial guess: $(\bar{A},\bar{N}_k,\bar{B},\bar{C})$; desired time interval: $[0,\tau]$ sec.\\
\textbf{\textit{Output:}} ROM: $(\tilde{A},\tilde{N}_k,\tilde{B},\tilde{C})$.
  \begin{algorithmic}[1]
\STATE \textbf{while} (not converged) \textbf{do}
\STATE Solve \begin{align}AV_\tau+V_\tau\bar{A}^T+\sum_{k=1}^{m}(N_kV_\tau\bar{N}_k^T-e^{A\tau}N_kV_\tau\bar{N}_k^Te^{\bar{A}^T\tau})+B\bar{B}^T-e^{A\tau}B\bar{B}^Te^{\bar{A}^T\tau}=0.\nonumber\end{align}
\STATE Solve \begin{align}A^TW_\tau+W_\tau\bar{A}+\sum_{k=1}^{m}(N_k^TW_\tau\bar{N}_k-e^{A^T\tau}N_k^TW_\tau\bar{N}_ke^{\bar{A}\tau})-C^T\bar{C}+e^{A^T\tau}C^T\bar{C}e^{\bar{A}\tau}=0.\nonumber\end{align}
\STATE $V=orth(V_\tau)$, $W=orth(W_\tau)$, $W=W(V^TW)^{-1}$.
\STATE $\bar{A}=W^TAV$, $\bar{N}_k=W^TN_kV$, $\bar{B}=W^TB$, $\bar{C}=CV$.
\STATE \textbf{end while}
\STATE $\tilde{A}=\bar{A}$, $\tilde{N}_k=\bar{N}_k$, $\tilde{B}=\bar{B}$, $\tilde{C}=\bar{C}$.
  \end{algorithmic}
\end{algorithm}
\begin{remark}
The selection of the initial guess of the ROM, i.e., $(\bar{A},\bar{N}_k,\bar{B},\bar{C})$, is assumed to be arbitrary in Algorithm \ref{alg1}. In case of linear systems, an appropriate choice of the initial guess is to select $\bar{A},\bar{B},\bar{C}$ such that it contains the dominant eigenvalues of $A$ (i.e., eigenvalues with large residues) and their associated residues. This is because these poles have a big contribution to the $\mathcal{H}_2$-norm of the error transfer function \citep{gugercin2008h_2}. Since bilinear systems closely resemble linear systems from a system theory perspective, one possible approach can be to compute the reduction subspaces $V$ and $W$ that spans the dominant eigenspace of $A$ by using the computationally efficient eigensolver proposed in \citep{rommes2006efficient}. $V$ and $W$ can then be used to generate the initial guess. Another possible approach (again inspired by the linear case \citep{vuillemin2013h2}) is to start the HOMORA \citep{breiten2012interpolation} arbitrarily and then use its final ROM as an initial guess in Algorithm \ref{alg1}.
\end{remark}
\begin{remark}
When $N_k=0$, the optimality conditions (\ref{2.16})-(\ref{2.19}) reduce to the one in the linear case \citep{goyal2019time}. Accordingly, the TLHMORA reduces to the algorithm presented in \citep{goyal2019time}.
\end{remark}
\begin{remark}
When the desired time interval is set to $[0,\infty]$, $Y_\tau=0$, $Z_\tau=0$, and the TLHMORA reduces to the HOMORA.
\end{remark}
\subsection{Time- and Frequency-limited $\mathcal{H}_{2}$-pseudo-optimal MOR}
In this subsection, we discuss the reason why the FLHMORA and the TLHMORA may not satisfy their respective optimality conditions despite offering good accuracy. Also, we propose two algorithms that enforce the respective optimality conditions associated with $\tilde{B}$ and $\tilde{C}$ on the ROM.
\subsubsection{Limitation in Projection-type Framework}
The FLHMORA mimics the HOMORA in trying to ensure that $V=\hat{P}_\omega\tilde{P}_\omega^{-1}$ and $W=-\hat{Q}_\omega\tilde{Q}_\omega^{-1}$ upon convergence. Similarly, the TLHMORA mimics the HOMORA in trying to ensure that $V=\hat{P}_\tau\tilde{P}_\tau^{-1}$ and $W=-\hat{Q}_\tau\tilde{Q}_\tau^{-1}$ upon convergence. From the perspective of projection problem, $\tilde{P}_\omega$, $\tilde{Q}_\omega$, $\tilde{P}_\tau$, and $\tilde{Q}_\tau$ can be seen as normalizations to ensure that $W^TV=I$. The condition $W^TV=I$ implies that $\hat{Q}_\omega^T\hat{P}_\omega+\tilde{Q}_\omega\tilde{P}_\omega=0$ or $\hat{Q}_\tau^T\hat{P}_\tau+\tilde{Q}_\tau\tilde{P}_\tau=0$. However, the dilemma, which the FLHMORA and the TLHMORA face, is that $\hat{Q}_\omega^T\hat{P}_\omega+\tilde{Q}_\omega\tilde{P}_\omega=0$ and $\hat{Q}_\tau^T\hat{P}_\tau+\tilde{Q}_\tau\tilde{P}_\tau=0$ do not correspond to the respective optimality conditions associated with $\tilde{A}$. Similarly, $\sum_{k=1}^{m}\tilde{Q}_\omega\tilde{N}_k\tilde{P}_\omega+\hat{Q}_\omega^TN_k\hat{P}_\omega=0$ and $\sum_{k=1}^{m}\tilde{Q}_\tau\tilde{N}_k\tilde{P}_\tau+\hat{Q}_\tau^TN_k\hat{P}_\tau=0$ also do not correspond to the respective optimality conditions associated with $\tilde{N}_k$. Thus the FLHMORA and the TLHMORA fail to ensure the optimality conditions associated with their respective problems upon convergence because these are projection-type algorithms. It can also be noted that $V=\hat{P}_\omega\tilde{P}_\omega^{-1}$ or $V=\hat{P}_\tau\tilde{P}_\tau^{-1}$ and $W=-\hat{Q}_\omega\tilde{Q}_\omega^{-1}$ or $W=-\hat{Q}_\tau\tilde{Q}_\tau^{-1}$ do correspond to the optimality conditions associated with $\tilde{C}$ and $\tilde{B}$, respectively. This is the main reason why these algorithms are able to generate a high-fidelity ROM. In the sequel, we focus on achieving the optimality conditions associated with $\tilde{B}$ and $\tilde{C}$, and we keep $\tilde{A}$ and $\tilde{N}_k$ fixed. The matrices $\tilde{A}$ and $\tilde{N}_k$ can be obtained from the final ROM constructed by the FLHMORA or the TLHMORA. This approach can no longer be considered as a projection-type technique as $\tilde{A}$ and $\tilde{N}_k$ are fixed, and $W^TV\neq I$.
\subsubsection{Frequency-limited pseudo-optimal $\mathcal{H}_2$-MOR algorithm (FLPHMORA)}Let $\bar{P}_{\omega}$ be the frequency-limited controllability gramian and  $\bar{Q}_{\omega}$ be the frequency-limited observability gramian of the initial guess $(\bar{A},\bar{N}_k,\bar{B},\bar{C})$. Then $\bar{P}_{\omega}$ and $\bar{Q}_{\omega}$ solve the following generalized Lyapunov equations
\begin{align}
\bar{A}\bar{P}_\omega+\bar{P}_\omega \bar{A}^T+\sum_{k=1}^{m}\big(F_\omega[\bar{A}]\bar{N}_{k}\bar{P}_\omega \bar{N}_{k}^T&+\bar{N}_{k}\bar{P}_\omega \bar{N}_{k}^TF_\omega[\bar{A}]^T\big)\nonumber\\
&+F_\omega [\bar{A}]\bar{B}\bar{B}^T+\bar{B}\bar{B}^TF_\omega[\bar{A}]^T=0,\label{2.1}\\
\bar{A}^T\bar{Q}_\omega+\bar{Q}_\omega \bar{A}+\sum_{k=1}^{m}\big(F_\omega[\bar{A}]^T\bar{N}_{k}^T\bar{Q}_\omega \bar{N}_{k}&+\bar{N}_{k}^T\bar{Q}_\omega \bar{N}_{k}F_\omega[\bar{A}]\big)\nonumber\\
&+F_\omega [\bar{A}]^T\bar{C}^T\bar{C}+\bar{C}^T\bar{C}F_\omega[\bar{A}]=0.\label{2.2}
\end{align}
It can be readily verified by inspection that the optimality condition (\ref{1.47}) can be enforced in a single-run if $\tilde{\Sigma}$ is computed as the following
\begin{align}
\tilde{A}=\bar{A},\hspace*{2mm}\tilde{N}_k=\bar{N}_k,\hspace*{2mm}\tilde{B}=-\bar{Q}_\omega^{-1}W_\omega^TB,\hspace*{2mm}\tilde{C}=\bar{C}.
\end{align}
It can also be noted that when $\tilde{\Sigma}$ satisfies the optimality condition (\ref{1.47}), the following holds
\begin{align}
||\Sigma_e||_{\mathcal{H}_{2,\omega}}^2&=trace(B^TQ_\omega B-\tilde{B}^T\bar{Q}_\omega\tilde{B})\nonumber\\
&=trace(B^TQ_\omega B-B^TW_\omega\bar{Q}_\omega^{-1}W_\omega^TB)\nonumber\\
&=||\Sigma||_{\mathcal{H}_{2,\omega}}^2-||\tilde{\Sigma}||_{\mathcal{H}_{2,\omega}}^2.\nonumber
\end{align}
Thus $W_\omega\bar{Q}_\omega^{-1}W_\omega^T$ is an approximation of $Q_\omega$. Similarly, the optimality condition (\ref{1.48}) can be enforced in a single-run if $\tilde{\Sigma}$ is computed as the following
\begin{align}
\tilde{A}=\bar{A},\hspace*{2mm}\tilde{N}_k=\bar{N}_k,\hspace*{2mm}\tilde{B}=\bar{B},\hspace*{2mm}\tilde{C}=CV_\omega\bar{P}_\omega^{-1}.
\end{align}
Again, it can readily be noted that when $\tilde{\Sigma}$ satisfies the optimality condition (\ref{1.48}), the following holds
\begin{align}
||\Sigma_e||_{\mathcal{H}_{2,\omega}}^2&=trace(CP_\omega C^T-\tilde{C}\bar{P}_\omega\tilde{C}^T)\nonumber\\
&=trace(CP_\omega C^T-CV_\omega\bar{P}_\omega^{-1}V_\omega^TC^T)\nonumber\\
&=||\Sigma||_{\mathcal{H}_{2,\omega}}^2-||\tilde{\Sigma}||_{\mathcal{H}_{2,\omega}}^2.\nonumber
\end{align}
Thus $V_\omega\bar{P}_\omega^{-1}V_\omega^T$ is an approximation of $P_\omega$.

We now present an algorithm that generates a ROM, which satisfies both the optimality conditions (\ref{1.47}) and (\ref{1.48}) upon convergence. The pseudo-code of our approach is given below in Algorithm \ref{alg2}. The steps (3)-(5) select $\bar{B}$ to enforce the optimality condition (\ref{1.47}). The steps (6)-(8) select $\bar{C}$ to enforce the optimality condition (\ref{1.48}). Thus Algorithm \ref{alg2} enforces (\ref{1.47}) and (\ref{1.48}) upon convergence while keeping $\bar{A}$ and $\bar{N}_k$ fixed.
\begin{algorithm}[!h]
  \caption{FLPHMORA}\label{alg2}
\textbf{\textit{Input:}} Original system : $(A,N_k,B,C)$; desired frequency interval: $[0,\omega]$ rad/sec.\\
\textbf{\textit{Output:}} ROM: $(\tilde{A},\tilde{N}_k,\tilde{B},\tilde{C})$.
\begin{algorithmic}[1]
\STATE Run FLHMORA, and set $\bar{A}=\tilde{A}$ and $\bar{N}_k=\tilde{N}_k$.
\STATE \textbf{while}(not converged) do
\STATE Compute $\bar{Q}_\omega$ from (\ref{2.2}).
\STATE Solve \begin{align}A^TW_\omega+W_\omega\bar{A}+\sum_{k=1}^{m}(F_\omega[A]^TN_k^TW_\omega\bar{N}_k+N_k^TW_\omega\bar{N}_kF_\omega[\bar{A}])-F_\omega[A]^TC^T\bar{C}-C^T\bar{C}F_\omega[\bar{A}]=0.\nonumber\end{align}
\STATE $\bar{B}=-\bar{Q}_\omega^{-1}W_\omega^TB$. $\%$ \textit{Satisfy the optimality condition (\ref{1.47})}.
\STATE Compute $\bar{P}_\omega$ from (\ref{2.1}).
\STATE Solve
\begin{align}AV_\omega+V_\omega\bar{A}^T+\sum_{k=1}^{m}(F_\omega[A]N_kV_\omega\bar{N}_k^T+N_kV_\omega\bar{N}_k^TF_\omega[\bar{A}]^T)+F_\omega[A]B\bar{B}^T+B\bar{B}^TF_\omega[\bar{A}]^T=0.\nonumber\end{align}
\STATE $\bar{C}=CV_\omega\bar{P}_\omega^{-1}$. $\%$ \textit{Satisfy the optimality condition (\ref{1.48})}.
\STATE \textbf{end while}
\STATE $\tilde{A}=\bar{A}$, $\tilde{N}_k=\bar{N}_k$, $\tilde{B}=\bar{B}$, $\tilde{C}=\bar{C}$.
\end{algorithmic}
\end{algorithm}
\begin{remark}
The FLPHMORA also provides the approximations of $P_\omega$ and $Q_\omega$ upon convergence, i.e., $V_\omega\bar{P}_\omega^{-1} V_\omega^T$ and $W_\omega\bar{Q}_\omega^{-1} W_\omega^T$, respectively. These approximations can be used to save some computational cost in FLBT \citep{shaker2013frequency}.
\end{remark}
\subsubsection{Time-limited pseudo-optimal $\mathcal{H}_2$-MOR algorithm (TLPHMORA)}
Let $\bar{P}_{\tau}$ be the time-limited controllability gramian and  $\bar{Q}_{\tau}$ be the time-limited observability gramian of the initial guess $(\bar{A},\bar{N}_k,\bar{B},\bar{C})$. Then $\bar{P}_{\tau}$ and $\bar{Q}_{\tau}$ solve the following generalized Lyapunov equations
\begin{align}
\bar{A}\bar{P}_\tau+\bar{P}_\tau \bar{A}^T+\sum_{k=1}^{m}\big(\bar{N}_{k}\bar{P}_\tau \bar{N}_{k}^T-e^{\bar{A}\tau}\bar{N}_{k}\bar{P}_\tau \bar{N}_{k}^Te^{\bar{A}^T\tau}\big)+\bar{B}\bar{B}^T-e^{\bar{A}\tau}\bar{B}\bar{B}^Te^{\bar{A}^T\tau}=0,\label{2.20}\\
\bar{A}^T\bar{Q}_\tau+\bar{Q}_\tau \bar{A}+\sum_{k=1}^{m}\big(\bar{N}_{k}^T\bar{Q}_\tau \bar{N}_{k}-e^{\bar{A}^T\tau}\bar{N}_{k}^T\bar{Q}_\tau \bar{N}_{k}e^{\bar{A}\tau}\big)+\bar{C}^T\bar{C}-e^{\bar{A}^T\tau}\bar{C}^T\bar{C}e^{\bar{A}\tau}=0.\label{2.21}
\end{align}
It can be verified by inspection that the optimality condition (\ref{2.18}) can be enforced in a single-run if $\tilde{\Sigma}$ is computed as
\begin{align}
\tilde{A}=\bar{A},\hspace*{2mm}\tilde{N}_k=\bar{N}_k,\hspace*{2mm}\tilde{B}=-\bar{Q}_\tau^{-1}W_\tau^TB,\hspace*{2mm}\tilde{C}=\bar{C}.
\end{align}
Note that when $\tilde{\Sigma}$ satisfies the optimality condition (\ref{2.18}), the following holds
\begin{align}
||\Sigma_e||_{\mathcal{H}_{2,\tau}}^2&=trace(B^TQ_\tau B-\tilde{B}^T\bar{Q}_\tau\tilde{B})\nonumber\\
&=trace(B^TQ_\tau B-B^TW_\tau\bar{Q}_\tau^{-1}W_\tau^TB)\nonumber\\
&=||\Sigma||_{\mathcal{H}_{2,\tau}}^2-||\tilde{\Sigma}||_{\mathcal{H}_{2,\tau}}^2.\nonumber
\end{align}
Thus $W_\tau\bar{Q}_\tau^{-1}W_\tau^T$ is an approximation of $Q_\tau$. Similarly, the optimality condition (\ref{2.19}) can be enforced in a single-run if $\tilde{\Sigma}$ is computed as the following
\begin{align}
\tilde{A}=\bar{A},\hspace*{2mm}\tilde{N}_k=\bar{N}_k,\hspace*{2mm}\tilde{B}=\bar{B},\hspace*{2mm}\tilde{C}=CV_\tau\bar{P}_\tau^{-1}.
\end{align}
Again, it can be noted that when $\tilde{\Sigma}$ satisfies the optimality condition (\ref{2.19}), the following holds
\begin{align}
||\Sigma_e||_{\mathcal{H}_{2,\tau}}^2&=trace(CP_\tau C^T-\tilde{C}\bar{P}_\tau\tilde{C}^T)\nonumber\\
&=trace(CP_\tau C^T-CV_\tau\bar{P}_\tau^{-1}V_\tau^TC^T)\nonumber\\
&=||\Sigma||_{\mathcal{H}_{2,\tau}}^2-||\tilde{\Sigma}||_{\mathcal{H}_{2,\tau}}^2.\nonumber
\end{align}
Thus $V_\tau\bar{P}_\tau^{-1}V_\tau^T$ is an approximation of $P_\tau$.

We now present an algorithm that generates a ROM, which satisfies both the optimality conditions (\ref{2.18}) and (\ref{2.19}) upon convergence. The pseudo-code of the proposed approach is given below in Algorithm \ref{alg3}. The steps (3)-(5) selects $\bar{B}$ to enforce the optimality condition (\ref{2.18}). The steps (6)-(8) selects $\bar{C}$ to enforce the optimality condition (\ref{2.19}). Thus Algorithm \ref{alg3} enforces (\ref{2.18}) and (\ref{2.19}) upon convergence while keeping $\bar{A}$ and $\bar{N}_k$ fixed.
\begin{algorithm}[!h]
  \caption{TLPOHMORA}\label{alg3}
\textbf{\textit{Input:}} Original system : $(A,N_k,B,C)$; desired time interval: $[0,\tau]$ sec.\\
\textbf{\textit{Output:}} ROM: $(\tilde{A},\tilde{N}_k,\tilde{B},\tilde{C})$.
\begin{algorithmic}[1]
\STATE Run TLHMORA, and set $\bar{A}=\tilde{A}$ and $\bar{N}_k=\tilde{N}_k$.
\STATE \textbf{while}(not converged) \textbf{do}
\STATE Compute $\bar{Q}_\tau$ from (\ref{2.21}).
\STATE Solve \begin{align}A^TW_\tau+W_\tau\bar{A}+\sum_{k=1}^{m}(N_k^TW_\tau\bar{N}_k-e^{A^T\tau}N_k^TW_\tau\bar{N}_ke^{\bar{A}\tau})-C^T\bar{C}+e^{A^T\tau}C^T\bar{C}e^{\bar{A}\tau}=0.\nonumber\end{align}
\STATE $\bar{B}=-\bar{Q}_\tau^{-1}W_\tau^TB$. $\%$ \textit{Satisfy the optimality condition (\ref{2.18})}.
\STATE Compute $\bar{P}_\tau$ from (\ref{2.20}).
\STATE Solve
\begin{align}AV_\tau+V_\tau\bar{A}^T+\sum_{k=1}^{m}(N_kV_\tau\bar{N}_k^T-e{A\tau}N_kV_\tau\bar{N}_k^Te^{\bar{A}^T\tau})+B\bar{B}^T-e^{A\tau}B\bar{B}^Te^{\bar{A}^T\tau}=0.\nonumber\end{align}
\STATE $\bar{C}=CV_\tau\bar{P}_\tau^{-1}$. $\%$ \textit{Satisfy the optimality condition (\ref{2.19})}.
\STATE \textbf{end while}
\STATE $\tilde{A}=\bar{A}$, $\tilde{N}_k=\bar{N}_k$, $\tilde{B}=\bar{B}$, $\tilde{C}=\bar{C}$.
\end{algorithmic}
\end{algorithm}
\begin{remark}
The TLPHMORA provides the approximations of $P_\tau$ and $Q_\tau$ upon convergence, i.e., $V_\tau\bar{P}_\tau^{-1} V_\tau^T$ and $W_\tau\bar{Q}_\tau^{-1} W_\tau^T$, respectively. These approximations can be used to save some computational cost in TLBT \citep{shaker2014time}.
\end{remark}
\begin{remark}
Throughout the text, we have considered the desired time and frequency intervals as $[0,\tau]$ sec and $[0,\omega]$ rad/sec, respectively. However, the results presented can be generalized for any time and frequency intervals, i.e., $[\tau_1,\tau_2]$ sec and $[\omega_1,\omega_2]$ rad/sec, respectively. For a generic frequency interval $[\omega_1,\omega_2]$ rad/sec, $F_\omega[A]$ becomes $F_\omega[A]=Real\Big(\frac{j}{2\pi}ln\big((j\omega_1I+A)^{-1}(j\omega_2I+A)\big)\Big)$ \citep{petersson2014model}. For a generic time interval $[\tau_1,\tau_2]$ sec, there are more changes. $\bar{P}_\tau$ and $\bar{Q}_\tau$, in this case, solve the following generalized Lyapunov equations \citep{shaker2014time}
\begin{align}
\bar{A}\bar{P}_\tau+\bar{P}_\tau \bar{A}^T+\sum_{k=1}^{m}\big(e^{\bar{A}\tau_1}\bar{N}_{k}\bar{P}_\tau \bar{N}_{k}^Te^{\bar{A}^T\tau_1}&-e^{\bar{A}\tau_2}\bar{N}_{k}\bar{P}_\tau \bar{N}_{k}^Te^{\bar{A}^T\tau_2}\big)\nonumber\\
&+e^{\bar{A}\tau_1}\bar{B}\bar{B}^Te^{\bar{A}^T\tau_1}-e^{\bar{A}\tau_2}\bar{B}\bar{B}^Te^{\bar{A}^T\tau_2}=0,\nonumber\\
\bar{A}^T\bar{Q}_\tau+\bar{Q}_\tau \bar{A}+\sum_{k=1}^{m}\big(e^{\bar{A}^T\tau_1}\bar{N}_{k}^T\bar{Q}_\tau \bar{N}_{k}e^{\bar{A}\tau_1}&-e^{\bar{A}^T\tau_2}\bar{N}_{k}^T\bar{Q}_\tau \bar{N}_{k}e^{\bar{A}\tau_2}\big)\nonumber\\
&+e^{\bar{A}^T\tau_1}\bar{C}^T\bar{C}e^{\bar{A}\tau_1}-e^{\bar{A}^T\tau_2}\bar{C}^T\bar{C}e^{\bar{A}\tau_2}=0.\nonumber
\end{align}
Moreover, $V_\tau$ and $W_\tau$ now solve the following generalized Sylvester equations
\begin{align}
AV_\tau+V_\tau\bar{A}^T+\sum_{k=1}^{m}(e^{A\tau_1}N_kV_\tau\bar{N}_k^Te^{\bar{A}^T\tau_2}&-e^{A\tau_2}N_kV_\tau\bar{N}_k^Te^{\bar{A}^T\tau_2})\nonumber\\
&+e^{A\tau_1}B\bar{B}^Te^{\bar{A}^T\tau_1}-e^{A\tau_2}B\bar{B}^Te^{\bar{A}^T\tau_2}=0,\nonumber\\ A^TW_\tau+W_\tau\bar{A}+\sum_{k=1}^{m}(e^{A^T\tau_1}N_k^TW_\tau\bar{N}_ke^{\bar{A}\tau_1}&-e^{A^T\tau_2}N_k^TW_\tau\bar{N}_ke^{\bar{A}\tau_2})\nonumber\\
&-e^{A^T\tau_1}C^T\bar{C}e^{\bar{A}\tau_1}+e^{A^T\tau_2}C^T\bar{C}e^{\bar{A}\tau_2}=0.\nonumber
\end{align}
\end{remark}
\begin{remark}
The selection of the order $r$ of ROM cannot be made on the fly as no computable \textit{apriori} error bound expression is available. Thus the value of $r$ can be increased to obtain a more accurate ROM in case the error $||\Sigma_e||_{\mathcal{H}_{2,\tau}}$ or $||\Sigma_e||_{\mathcal{H}_{2,\omega}}$ is greater than the desired tolerance.
\end{remark}
\begin{remark}
The stability of the ROM can not be guaranteed theoretically by all the algorithms under consideration. But since the stability of the ROM depends on the matrices $\tilde{A}$ and $\tilde{N}_k$, Algorithms \ref{alg2} and \ref{alg3} can ensure that the final ROM is stable in an \textit{ad-hoc} sense. This can be done by rejecting any choice of $\bar{A}$ and $\bar{N}_k$ in step $1$ if $\bar{A}$ has eigenvalues in the right half of the $s$-plane or $\bar{N}_k$ is unbounded. See \citep{breiten2012interpolation} for a precise definition of a stable bilinear system.
\end{remark}
\subsection{Computational Aspects}
The TLBT, the TLHMORA, and the TLPOHMORA require the computation of $e^{A\tau}B$, $e^{A^T\tau}C^T$, and $e^{A\tau}N_k$, which is expensive in large-scale setting. These products of matrix exponentials need to be approximated in case $\Sigma$ is large-scale. Similarly, the FLBT, the FLHMORA, and the FLPOHMORA require the computation of $F_\omega[A]B$, $F_\omega[A]^T C^T$, and $F_\omega[A]N_k$, which is expensive in large-scale setting. As shown in \citep{kurschner2018balanced} and \citep{benner2016frequency}, $e^{A\tau}B$, $e^{A^T\tau}C^T$, $F_\omega[A]B$, and $F_\omega[A]^T C^T$ can be approximated by using projection (when $\Sigma$ is a linear system) as
\begin{align}
e^{A\tau}B&\approx Ve^{\tilde{A}\tau}\tilde{B},&e^{A^T\tau}C^T&\approx We^{\tilde{A}^T\tau}\tilde{C}^T,\\
F_\omega[A]B&\approx VF_\omega[\tilde{A}]\tilde{B},&F_\omega[A]^TC^T&\approx WF_\omega[\tilde{A}]^T\tilde{C}^T.
\end{align}
Similarly, $e^{A\tau}N_k$ and $F_\omega[A]N_k$ can be approximated in bilinear setting by using
\begin{align}
e^{A\tau}N_k&\approx Ve^{\tilde{A}\tau}\tilde{N}_kW^T,&&\textnormal{and}&F_\omega[A]N_k&\approx VF_\omega[\tilde{A}]\tilde{N}_kW^T.
\end{align}
The reduction subspaces $V$ and $W$ for this purpose can be generated by using HOMORA as it accurately captures the dynamics of $\Sigma$. The iterative algorithm can be stopped when the relative change in $e^{\tilde{A}\tau}\tilde{B}$, $e^{\tilde{A}^T\tau}\tilde{C}^T$, $e^{\tilde{A}\tau}\tilde{N}_k$, $F_\omega[\tilde{A}]\tilde{B}$, $F_\omega[\tilde{A}]^T\tilde{C}^T$, and $F_\omega[\tilde{A}]\tilde{N}_k$ stagnate as we are not interested in constructing an $\mathcal{H}_2$-optimal ROM. This approach is summarized in Algorithm \ref{alg4}.
\begin{algorithm}[!h]
  \caption{Approximation of exponential and logarithmic products}\label{alg4}
\textbf{\textit{Input:}} Original system : $(A,N_k,B,C)$ and initial guess of $(\tilde{A},\tilde{N}_k,\tilde{B},\tilde{C})$\\
\textbf{\textit{Output:}} Approximation of $e^{A\tau}B$, $e^{A^T\tau}C^T$, $e^{A\tau}N_k$, $F_\omega[A]B$, $F_\omega[A]^TC^T$, and $F_\omega[A]N_k$.
\begin{algorithmic}[1]
\STATE \textbf{while}(not converged) \textbf{do}
\STATE Compute $AV+V\tilde{A}^T+\sum_{k=1}^{m}N_kV\tilde{N}_k^T+B\tilde{B}^T=0$.
\STATE Compute $A^TW+W\tilde{A}+\sum_{k=1}^{m}N_k^TW\tilde{N}_k-C^T\tilde{C}=0$.
\STATE $V=orth(V)$, $W=orth(W)$, $W=W(V^TW)^{-1}$.
\STATE $\tilde{A}=W^TAV$, $\tilde{N}_k=W^TN_kV$, $\tilde{B}=W^TB$, $\tilde{C}=CV$.
\STATE $e^{A\tau}B\approx Ve^{\tilde{A}\tau}\tilde{B}$, $e^{A^T\tau}C^T\approx We^{\tilde{A}^T\tau}\tilde{C}^T$, $e^{A\tau}N_k\approx Ve^{\tilde{A}\tau}\tilde{N}_kW^T$.
\STATE $F_\omega[A]B\approx VF_\omega[\tilde{A}]\tilde{B}$, $F_\omega[A]^TC^T\approx WF_\omega[\tilde{A}]^T\tilde{C}^T$, $F_\omega[A]N_k\approx VF_\omega[\tilde{A}]\tilde{N}_kW^T$.
\STATE \textbf{end while}
\end{algorithmic}
\end{algorithm}
\begin{remark}
The solution generalized Sylvester equations for computing $V_\tau$ and $W_\tau$ is a computational challenge for large-scale $A$. Note that by using vectorization operator $vec(\cdot)$ and Kronecker product, $V_\tau$ and $W_\tau$ can be computed by solving the following linear system of equations
\begin{align}
\Big(I\otimes A+\bar{A}\otimes I+\sum_{k=1}^{m}e^{\bar{A}\tau_1}\bar{N}_k\otimes &e^{A\tau_1}N_k-\sum_{k=1}^{m}e^{\bar{A}\tau_2}\bar{N}_k\otimes e^{A\tau_2}N_k\Big)vec(V_\tau)\nonumber\\
&=-vec\big(e^{A\tau_1}B\bar{B}^Te^{\bar{A}^T\tau_1}-e^{A\tau_2}B\bar{B}^Te^{\bar{A}^T\tau_2}\big),\nonumber\\
\Big(I\otimes A^T+\bar{A}^T\otimes I+\sum_{k=1}^{m}e^{\bar{A^T}\tau_1}\bar{N}_k^T\otimes &e^{A^T\tau_1}N_k^T-\sum_{k=1}^{m}e^{\bar{A}^T\tau_2}\bar{N}_k^T\otimes e^{A^T\tau_2}N_k^T\Big)vec(W_\tau)\nonumber\\
&=vec\big(e^{A^T\tau_1}C^T\bar{C}e^{\bar{A}\tau_1}-e^{A^T\tau_2}C^T\bar{C}e^{\bar{A}\tau_2}\big).\nonumber
\end{align}
The approximate solutions of the above linear systems of equations can be obtained within admissible time with the iterative Krylov subspace based solvers proposed in \citep{bouhamidi2008note}, and by using the algorithms in \citep{bennersparse} for preconditioning.
\end{remark}
\begin{remark}
$V_\tau$ and $W_\tau$ can also be computed iteratively \citep{breiten2012interpolation,shaker2014time} by using
\begin{align}
AV_{1,\tau}&+V_{1,\tau}\bar{A}^T+e^{A\tau_1}B\bar{B}^Te^{\bar{A}^T\tau_1}-e^{A\tau_2}B\bar{B}^Te^{\bar{A}^T\tau_2}=0,\nonumber\\
AV_{i,\tau}&+V_{i,\tau}\bar{A}^T+\sum_{k=1}^{m}(e^{A\tau_1}N_kV_{i,\tau}\bar{N}_k^Te^{\bar{A}^T\tau_1}-e^{A\tau_2}N_kV_{i,\tau}\bar{N}_k^Te^{\bar{A}^T\tau_2})\nonumber\\
&\hspace*{5cm}+e^{A\tau_1}B\bar{B}^Te^{\bar{A}^T\tau_1}-e^{A\tau_2}B\bar{B}^Te^{\bar{A}^T\tau_2}=0,\nonumber\\
A^TW_{1,\tau}&+W_{1,\tau}\bar{A}-e^{A^T\tau_1}C^T\bar{C}e^{\bar{A}\tau_1}+e^{A^T\tau_2}C^T\bar{C}e^{\bar{A}\tau_2}=0,\nonumber\\
A^TW_{i,\tau}&+W_{i,\tau}\bar{A}-\sum_{k=1}^{m}(e^{A^T\tau_1}N_k^TW_{i,\tau}\bar{N}_ke^{\bar{A}\tau_1}-e^{A^T\tau_2}N_k^TW_{i,\tau}\bar{N}_ke^{\bar{A}\tau_2})\nonumber\\
&\hspace*{5cm}-e^{A^T\tau_1}C^T\bar{C}e^{\bar{A}\tau_1}+e^{A^T\tau_2}C^T\bar{C}e^{\bar{A}\tau_2}=0,\nonumber
\end{align} where $V_\tau=\lim_{i \to \infty}V_{i,\tau}$ and $W_\tau=\lim_{i \to \infty}W_{i,\tau}$. The computational cost in this approach can be controlled by truncating the iterations at a small value of $i$ as done in \citep{benner2017truncated}.
\end{remark}
\begin{remark}
It should be stressed here that the computation of $P_\tau$ and $Q_\tau$ in the TLBT \citep{shaker2014time} is more expensive than $V_\tau$ and $W_\tau$ because the equations (\ref{eq:16}) and (\ref{eq:17}) do not involve small-scale matrix $\bar{A}$. Thus the computational superiority of the TLHMORA and the TLPHMORA over TLBT is obvious, provided they converge quickly.
\end{remark}
\section{Numerical Examples}
In this section, we test our algorithms on three numerical examples. These are standard examples considered in the literature to test MOR algorithms for bilinear control systems \citep{benner2011lyapunov,breiten2012interpolation,shaker2013frequency,ahmad2017implicit,xu2017approach}. We use the ROMs generated by the BT, the FLBT, and the TLBT to initialize the HOMORA, the FLHMORA, and the TLHMORA, respectively. We obtain the solutions of generalized Sylvester and Lyapunov equations using the iterative method described in the last section (and also in \citep{breiten2012interpolation,shaker2013frequency,shaker2014time}). We truncate the solution after $3$ iterations because we do not see any significant change or improvement in the results by going over $3$ iterations. We solve the Lyapunov equations exactly by using MATLAB's \textit{lyap} command for all the algorithms. Similarly, we compute the matrix exponentials and matrix logarithms exactly for all the examples by using MATLAB's \textit{expm} and \textit{logm} commands, respectively. The tolerance for convergence in the HOMORA, the FLHMORA, the TLHMORA, the FLPHMORA, and the TLPHMORA is set to $1\times 10^{-5}$. All the experiments are performed using MATLAB $2016$ on a laptop with $2$GHz $i7$ Intel processor and $16$GB random access memory (RAM).

\textbf{Illustrative Example:} Consider a $7^{th}$ order illustrative example from \citep{shaker2013frequency}, which has the following state-space matrices
\begin{align}
A&=\begin{bmatrix}-0.81 & 0.47 & -0.43 & 1.6 & 0.26 & -0.4 & 0.92\\
    -0.61 & -1.9 & 0.8 & -1.6 & 2 & 0.98 & -0.9\\
    0.5 & -1.2 & -2.1 & -1.6 & -1.1 & 0.14 & -0.87\\
    -1.3 & 2.1 & 0.47 & -1.2 & 3.7 & -1.2 & -1.3\\
    -0.24 & -0.081 & 1.6 & -3.6 & -1.3 & 1.7 & -2.6\\
    1.3 & -0.96 & -1.3 & -0.57 & -2.4 & -2.4 & -0.36\\
    -0.16 & 1.5 & -0.99 & 1.5 & 0.61 & -2.2 & -3.3\end{bmatrix},\nonumber\\
N&=\begin{bmatrix}-1& 0 & 0 & 0 & 0 & 0 & 0\\
0& -1 & 0 & 0 & 0 & 0 & 0\\
0& 0 & -1 & 0 & 0 & 0 & 0\\
0& 0 & 0 & -1 & 0 & 0 & 0\\
0& 0 & 0 & 0 & -1 & 0 & 0\\
0& 0 & 0 & 0 & 0 & -1 & 0\\
0& 0 & 0 & 0 & 0 & 0 & -1\end{bmatrix},\hspace*{2mm}B=\begin{bmatrix}0 \\ 0 \\ -0.196 \\ 1.42 \\ 0.292 \\ 0.198 \\ 1.59\end{bmatrix},\nonumber\\
C&=\begin{bmatrix}-0.804 &0 & 0.835 & -0.244 & 0.216 & -1.17 & -1.15\end{bmatrix}.\nonumber
\end{align} Let the input signal be a sinusoid with the frequency and amplitude of $5$ rad/sec and $0.01$, respectively, i.e., $u(t)=0.01sin(5t)$. We obtain $1^{st}$ order ROMs using the BT, the FLBT, the HOMORA, the FLHMORA, and the FLPHMORA. We set the desired frequency interval as $[4,6]$ rad/sec in the FLBT, the FLHMORA, and the FLPHMORA to ensure good accuracy at and in close neighbourhood of $5$ rad/sec. The absolute error in the output response is compared in Figure \ref{fig1} on a logarithmic scale, and it can be seen that the FLPHMORA provides the best approximation.
\begin{figure}[!h]
\centering
\includegraphics[width=12cm]{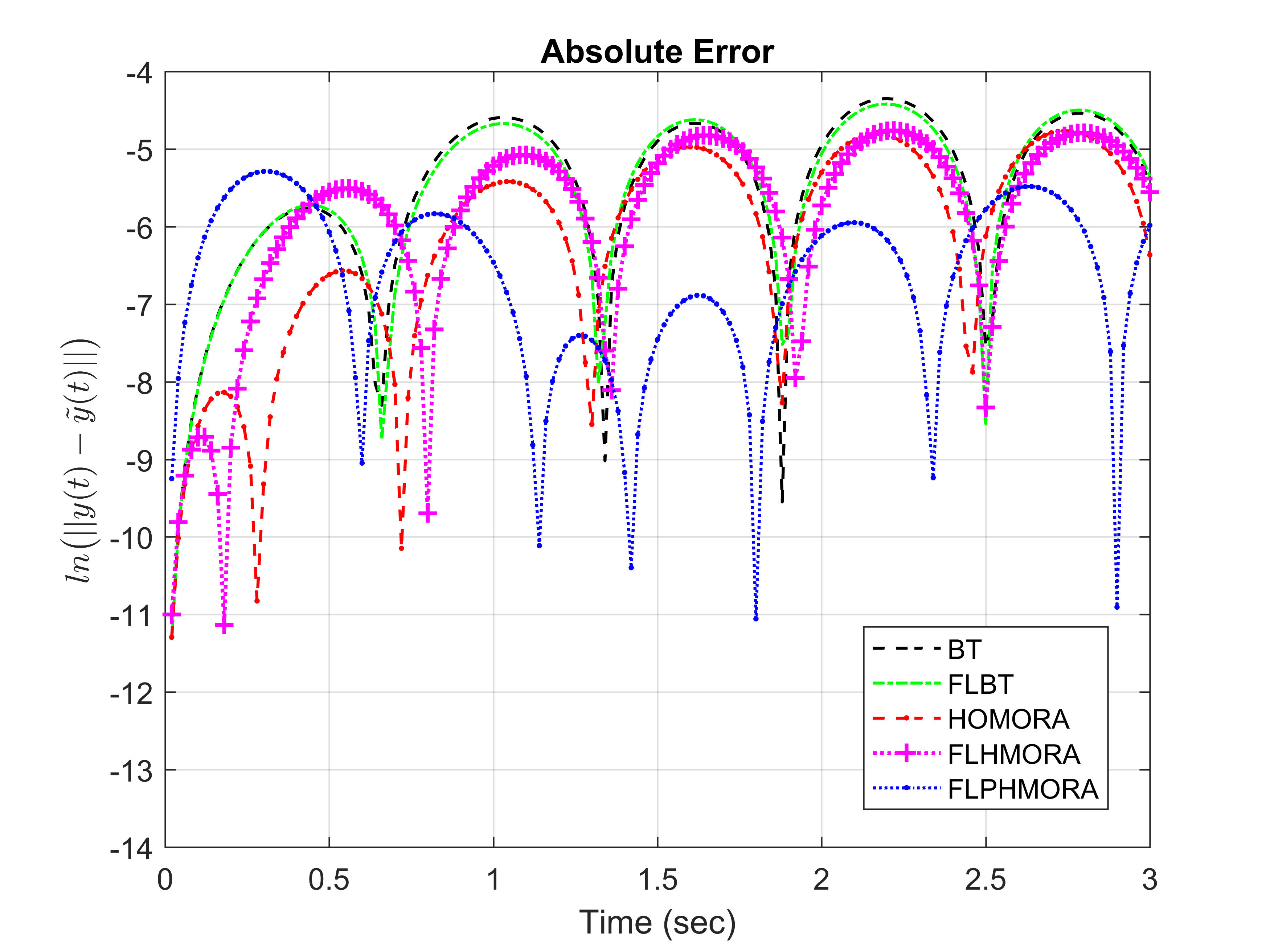}
\caption{$ln\big(||y(t)-\tilde{y}(t)||\big)$ for the input $u(t)=0.01sin(5t)$}\label{fig1}
\end{figure} The ROM generated by the FLPHMORA satisfies the optimality conditions (\ref{1.47}) and (\ref{1.48}) exactly. For the ROM generated by the FLHMORA, $C\hat{P}_\omega-\tilde{C}\tilde{P}_\omega=0.1364$ and $\hat{Q}_\omega^TB+\tilde{Q}_\omega\tilde{B}=0.0457$. Table \ref{tab0} compares the approximation error $||\Sigma-\tilde{\Sigma}||_{\mathcal{H}_{2,\omega}}$, and it can be seen that the FLPHMORA has minimum error as compared to other methods. The frequency-domain responses of the linearized error transfer functions are also plotted in Figure \ref{fig2a}, and here also, the FLPHMORA ensures the least error.
\begin{figure}[!h]
\centering
\includegraphics[width=12cm]{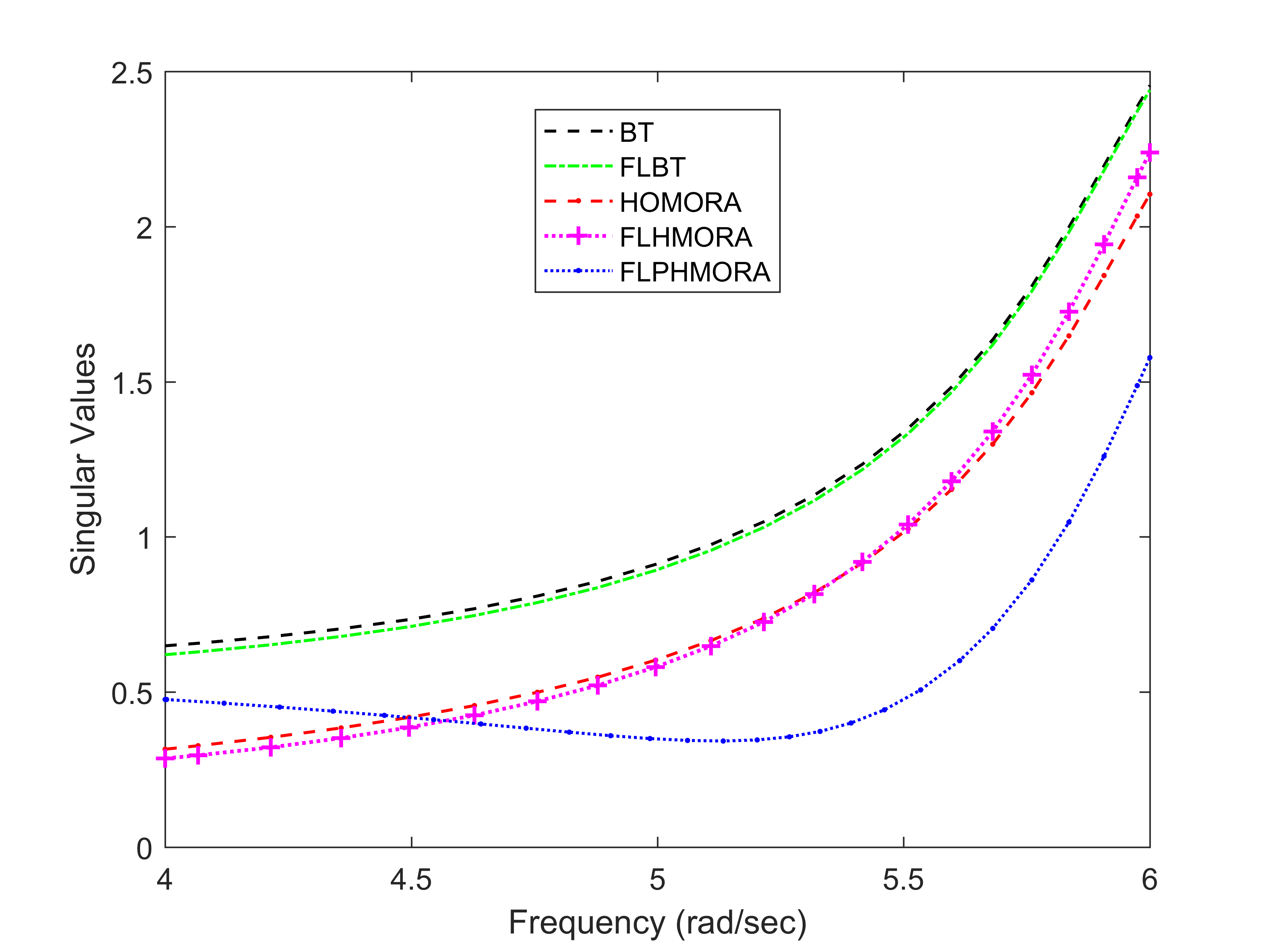}
\caption{Singular values of the linearized $\Sigma_e$ within $[4,6]$ rad/sec}\label{fig2a}
\end{figure}
\begin{table}[h]
\centering
\caption{Error Comparison: $||\Sigma-\tilde{\Sigma}||_{\mathcal{H}_{2,\omega}}$}\label{tab0}
\begin{tabular}{|c|c|c|c|c|}
\hline
\textbf{BT}     & \textbf{FLBT}      & \textbf{HOMORA} & \textbf{FLHMORA}   & \textbf{FLPHMORA} \\ \hline
\multicolumn{5}{|c|}{\textbf{Illustrative Example}}          \\ \hline
1.1995 & 1.1893    & 1.0302 & 1.0318    & 0.8640    \\ \hline
\multicolumn{5}{|c|}{\textbf{Power System Example}}          \\ \hline
0.0196 & 0.0026    & 0.0186 & 0.0026    & 0.0021    \\ \hline
\multicolumn{5}{|c|}{\textbf{Heat Transfer Example}}         \\ \hline
0.0015 & 1.5351$\times 10^{-5}$ & 0.0029 & 1.5344$\times 10^{-5}$ & $51.5344\times 10^{-5}$ \\ \hline
\end{tabular}
\end{table}

Next, we obtain $3^{rd}$ order ROMs using the BT, the TLBT, the HOMORA, the TLHMORA, and the TLPHMORA. We set the desired time interval as $[0,0.5]$ sec in the TLBT, the TLHMORA, and the TLPHMORA to ensure good accuracy within $[0,0.5]$ sec. The absolute error in the output response is compared in Figure \ref{fig2} on a logarithmic scale, and it can be seen that the time-limited MOR algorithms provide the best approximation within the desired time interval.
\begin{figure}[!h]
\centering
\includegraphics[width=12cm]{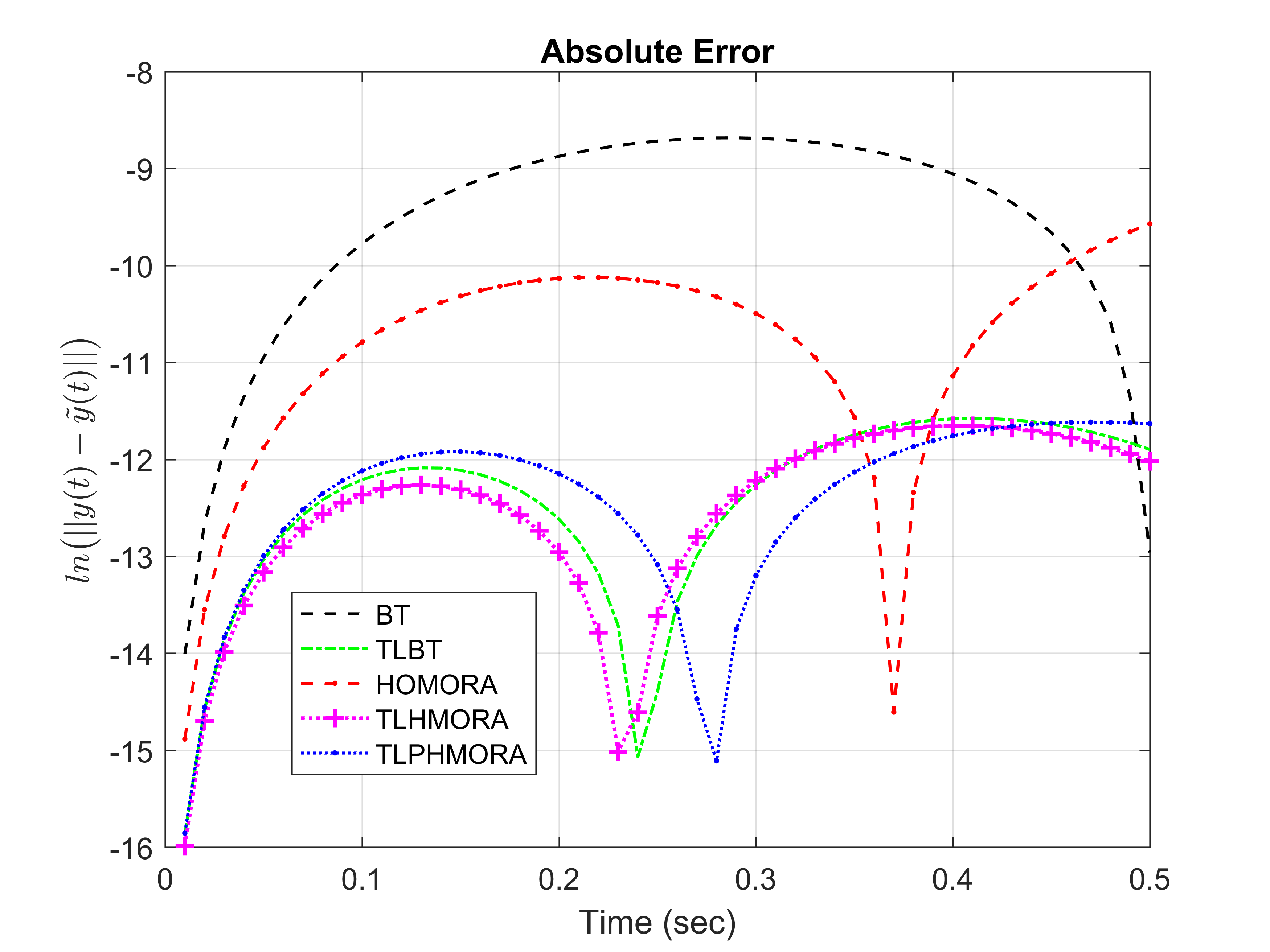}
\caption{$ln\big(||y(t)-\tilde{y}(t)||\big)$ for the input $u(t)=0.01sin(5t)$ within $[0,0.5]$ sec}\label{fig2}
\end{figure} The ROM generated by the TLPHMORA satisfies the optimality conditions (\ref{2.18}) and (\ref{2.19}) exactly. For the ROM generated by the TLHMORA, $C\hat{P}_\tau-\tilde{C}\tilde{P}_\tau=\begin{bmatrix} 0.0008  &  0.0022 &  -0.0005\end{bmatrix}$ and $\hat{Q}_\tau^TB+\tilde{Q}_\tau\tilde{B}=\begin{bmatrix}-0.0006 & -0.0014 &0.0002\end{bmatrix}^T$. Table \ref{tab1} compares the approximation error $||\Sigma-\tilde{\Sigma}||_{\mathcal{H}_{2,\tau}}$, and it can be noted that the TLHMORA and the TLPHMORA offer the least error.
\begin{table}[h]
\centering
\caption{Error Comparison: $||\Sigma-\tilde{\Sigma}||_{\mathcal{H}_{2,\tau}}$}\label{tab1}
\begin{tabular}{|c|c|c|c|c|}
\hline
\textbf{BT}     & \textbf{TLBT}      & \textbf{HOMORA} & \textbf{TLHMORA}   & \textbf{TLPHMORA} \\ \hline
\multicolumn{5}{|c|}{\textbf{Illustrative Example}}          \\ \hline
0.0850 & 0.0135    & 0.0385 & 0.0125    & 0.0121    \\ \hline
\multicolumn{5}{|c|}{\textbf{Power System Example}}          \\ \hline
0.0349 & 0.0068    & 0.0331 & 0.0065    & 0.0059    \\ \hline
\end{tabular}
\end{table}

\textbf{Power System Example:} Consider a $17^{th}$ order power system model from \citep{al1993new} that has $4$ inputs and $3$ outputs. This model has been used as a benchmark in the literature; see for instance \citep{breiten2012interpolation,ahmad2017implicit}. Let the input signal be a sinusoid with a frequency and amplitude of $3$ rad/sec and $0.01$, respectively, i.e., $u(t)=0.01sin(3t)$. We obtain $9^{th}$ order ROMs using the BT, the FLBT, the HOMORA, the FLHMORA, and the FLPHMORA. We set the desired frequency interval as $[2,4]$ rad/sec in the FLBT, the FLHMORA, and the FLPHMORA to ensure good accuracy at and in close neighbourhood of $3$ rad/sec. The absolute error in the output response (corresponding to the first output) is compared in Figure \ref{fig3} on a logarithmic scale, and it can be seen that the frequency-limited MOR algorithms provide the best approximation. The frequency-domain responses of the linearized error transfer functions are plotted in Figure \ref{fig3a}. It is evident from Figure \ref{fig3a} that the FLPHMORA ensures the least error.
\begin{figure}[!h]
\centering
\includegraphics[width=12cm]{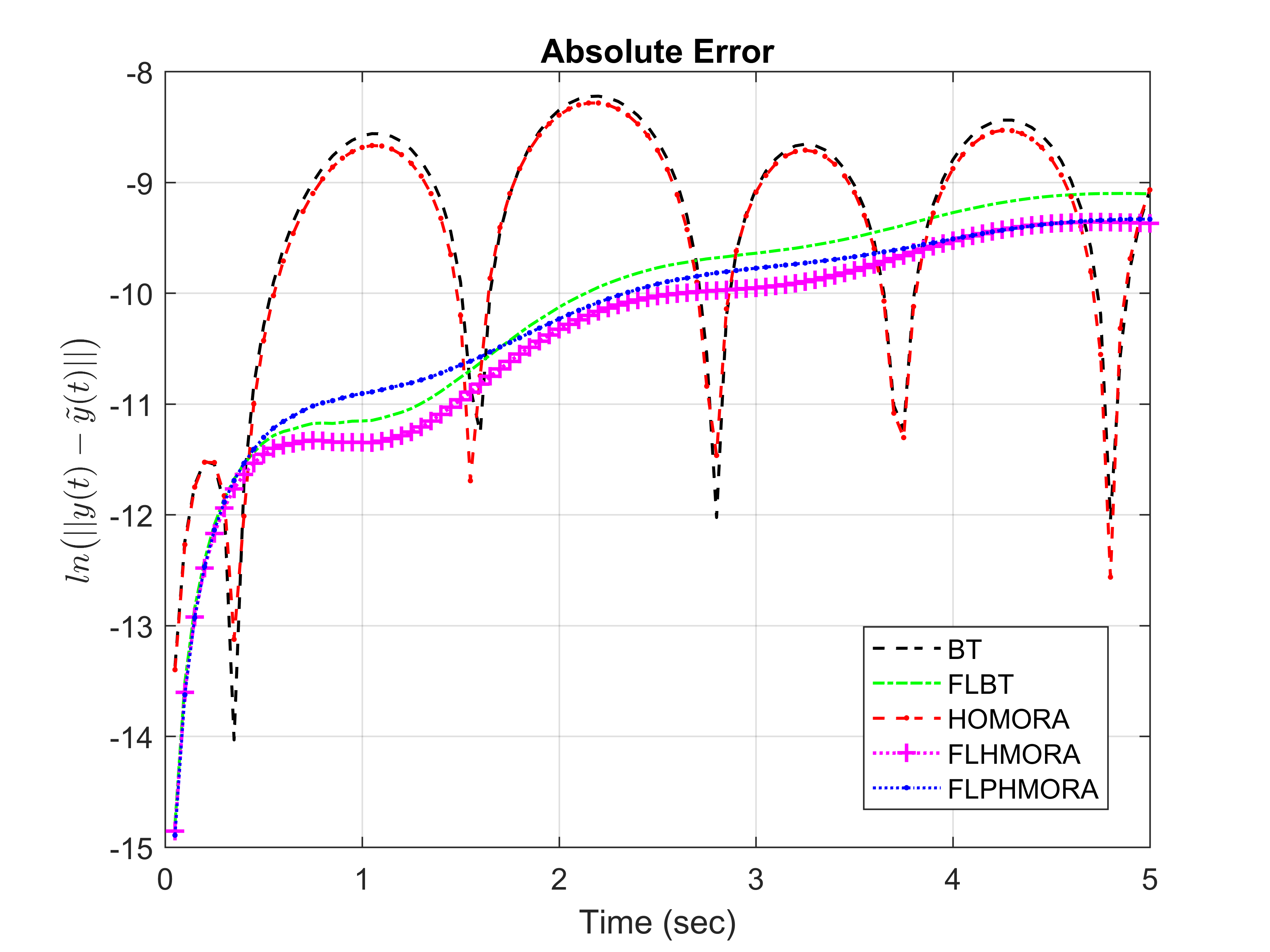}
\caption{$ln\big(||y(t)-\tilde{y}(t)||\big)$ for the input $u(t)=0.01sin(3t)$}\label{fig3}
\end{figure}
\begin{figure}[!h]
\centering
\includegraphics[width=12cm]{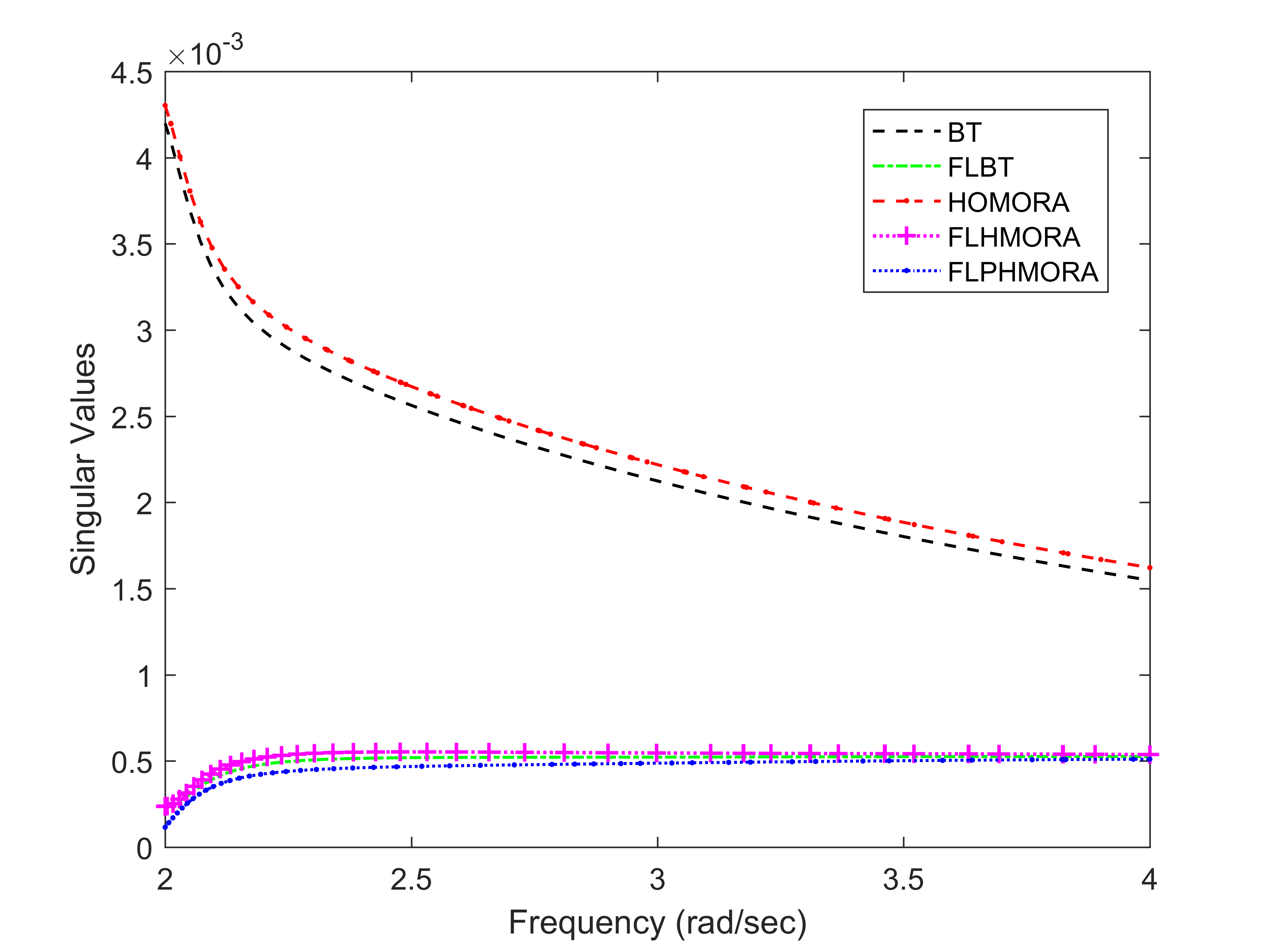}
\caption{Singular values of the linearized $\Sigma_e$ within $[2,4]$ rad/sec}\label{fig3a}
\end{figure} Table \ref{tab0} compares the approximation error $||\Sigma-\tilde{\Sigma}||_{\mathcal{H}_{2,\omega}}$, and it can be seen that the FLPHMORA yields the least error.

Next, we obtain $9^{th}$ order ROMs using the TLBT, the TLHMORA, and the TLPHMORA. We set the desired time interval as $[0,2]$ sec in the TLBT, the TLHMORA, and the TLPHMORA to ensure good accuracy within $[0,2]$ sec. The absolute error in the output response (corresponding to the first output) is compared in Figure \ref{fig4} on a logarithmic scale, and it can be seen that the time-limited MOR algorithms provide the best approximation within the desired time interval.
\begin{figure}[!h]
\centering
\includegraphics[width=12cm]{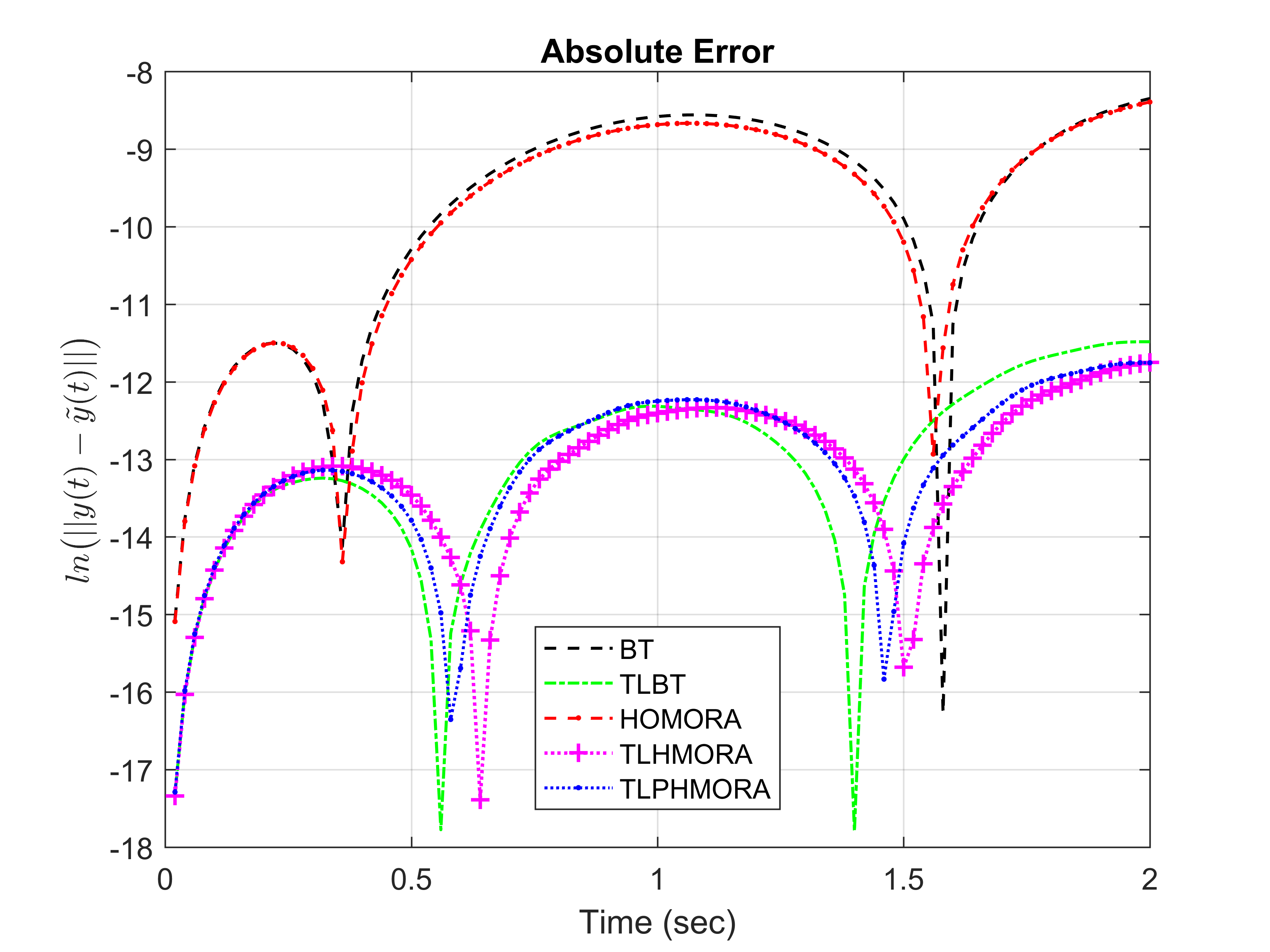}
\caption{$ln\big(||y(t)-\tilde{y}(t)||\big)$ for the input $u(t)=0.01sin(3t)$ within $[0,2]$ sec}\label{fig4}
\end{figure} Table \ref{tab1} compares the approximation error $||\Sigma-\tilde{\Sigma}||_{\mathcal{H}_{2,\tau}}$, and it can be noted that the TLHMORA and the TLPHMORA offer the least error.

\textbf{Heat Transfer Example:} Consider the boundary controlled heat transfer model, which is considered as a benchmark problem in the literature \citep{benner2011lyapunov,breiten2012interpolation,ahmad2017implicit,xu2017approach}. The spatial discretization of the model used in \citep{ahmad2017implicit} using $2500$ grid points yields a $2500^{th}$ order single-input single-output bilinear system. Let the input signal be a sinusoid with a frequency and amplitude of $1$ rad/sec and $0.01$, respectively, i.e., $u(t)=0.01sin(1t)$. We obtain $1^{st}$ order ROMs using the BT, the FLBT, the HOMORA, the FLHMORA, and the FLPHMORA. We set the desired frequency interval as $[0,2]$ rad/sec in the FLBT, the FLHMORA, and the FLPHMORA to ensure good accuracy at and in close neighbourhood of $1$ rad/sec. The absolute error in the output response is compared in Figure \ref{fig5}, and it can be seen that the frequency-limited MOR algorithms provide the best approximation. The frequency-domain responses of the linearized error transfer functions are plotted in Figure \ref{fig4a}. It is evident from Figure \ref{fig4a} that the FLPHMORA ensures good accuracy.
\begin{figure}[!h]
\centering
\includegraphics[width=12cm]{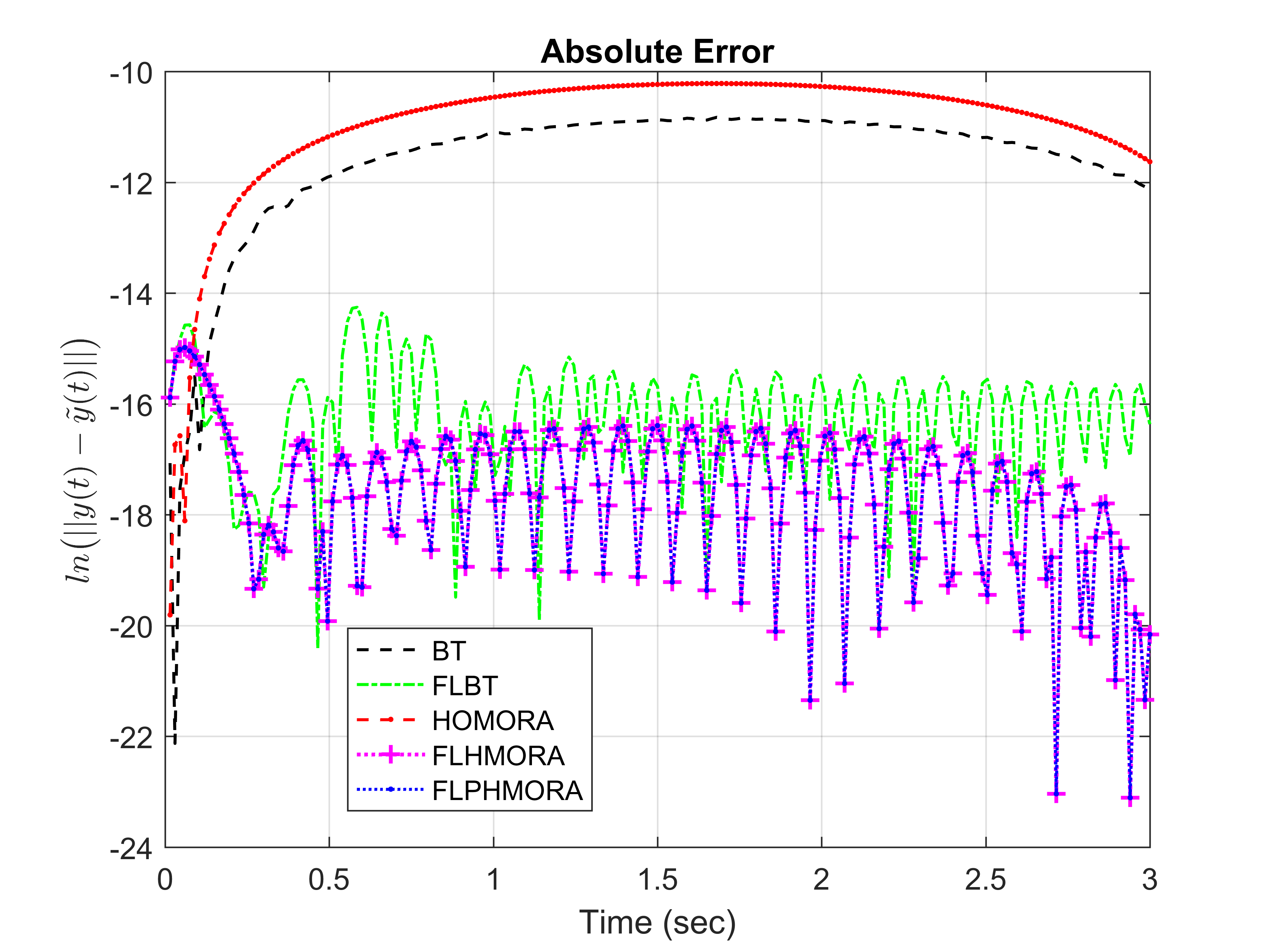}
\caption{$ln\big(||y(t)-\tilde{y}(t)||\big)$ for the input $u(t)=0.01sin(1t)$}\label{fig5}
\end{figure}
\begin{figure}[!h]
\centering
\includegraphics[width=12cm]{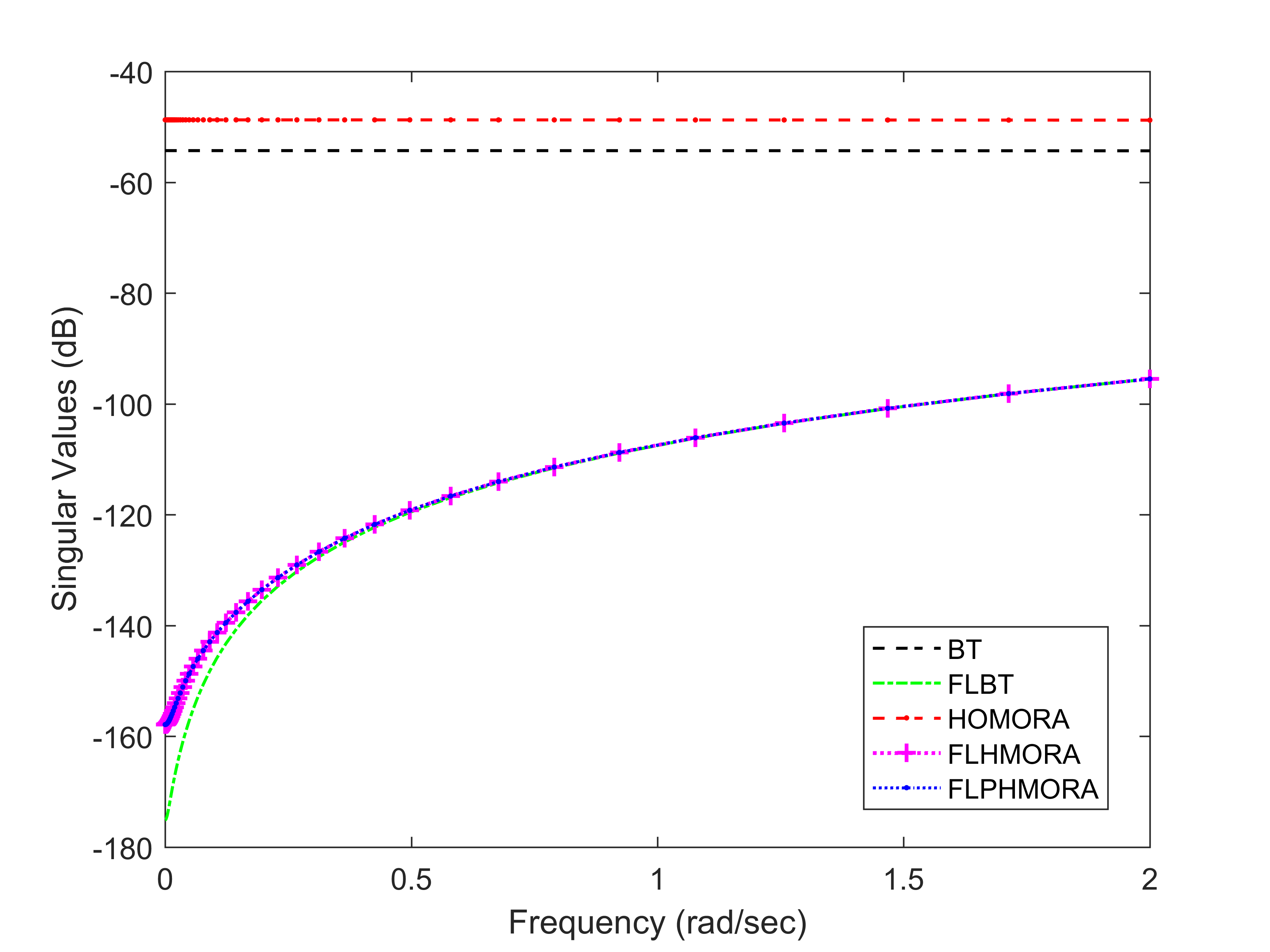}
\caption{Singular values of the linearized $\Sigma_e$ within $[0,2]$ rad/sec}\label{fig4a}
\end{figure} Table \ref{tab0} compares the approximation error $||\Sigma-\tilde{\Sigma}||_{\mathcal{H}_{2,\omega}}$, and it can be seen that the FLPHMORA yields the least error. Table \ref{taba} compares the simulation time, and it can be seen that the FLPHMORA takes the least time for execution.
\begin{table}[h]
\centering
\caption{Simulation Time (sec)}\label{taba}
\begin{tabular}{|c|c|c|c|c|}
\hline
\textbf{BT}     & \textbf{FLBT}      & \textbf{HOMORA} & \textbf{FLHMORA}   & \textbf{FLPHMORA} \\ \hline
$543.95$  & $706.04$     & $292.59$  & $278.37$     & $91.84$    \\ \hline
\end{tabular}
\end{table}

Next, we set the grid points to $529$, which results in $529^{th}$ order single-input single-output bilinear system. We obtain $1^{st}$ order ROMs using the TLBT, the TLHMORA, and the TLPHMORA. We set the desired time interval as $[0.5,1.5]$ sec in the TLBT, the TLHMORA, and the TLPHMORA to ensure good accuracy within $[0.5,1.5]$ sec. The absolute error in the output response is compared in Figure \ref{fig6} on a logarithmic scale, and it can be seen that the time-limited MOR algorithms provide the best approximation within the desired time interval.
\begin{figure}[!h]
\centering
\includegraphics[width=12cm]{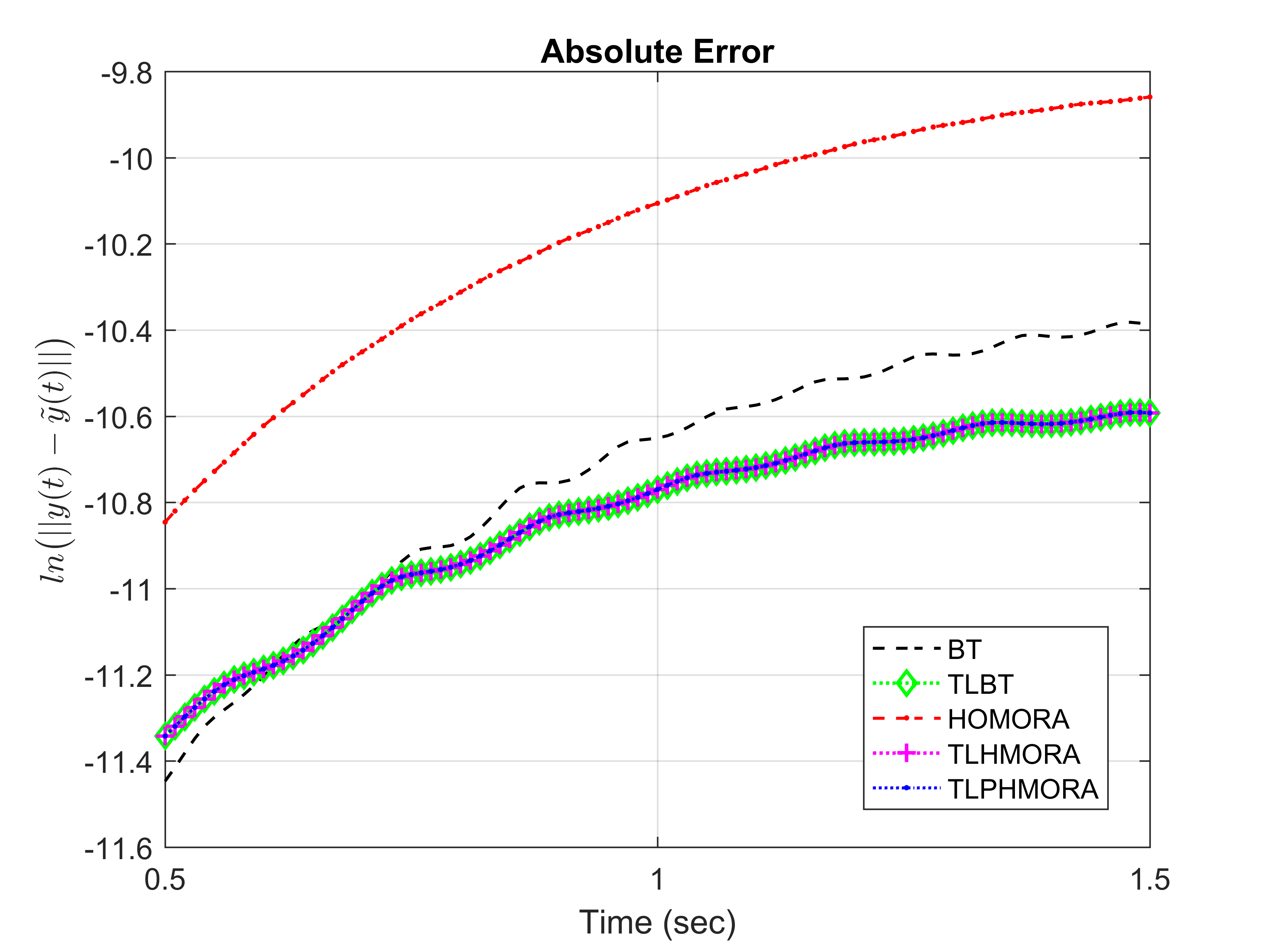}
\caption{$ln\big(||y(t)-\tilde{y}(t)||\big)$ for the input $u(t)=0.01sin(1t)$ with $[0.5,1.5]$ sec}\label{fig6}
\end{figure} The value of $||\Sigma||_{\mathcal{H}_{2,\tau}}$ in this example is quite small. Therefore, we compare the relative errors in this example for clarity. Table \ref{tab6} compares the relative error $\frac{||\Sigma-\tilde{\Sigma}||_{\mathcal{H}_{2,\tau}}}{||\Sigma||_{\mathcal{H}_{2,\tau}}}$, and it can be noted that the TLHMORA and the TLPHMORA offer the least error.
\begin{table}
\caption{\textbf{Heat Transfer Example:} Error Comparison: $\frac{||\Sigma-\tilde{\Sigma}||_{\mathcal{H}_{2,\tau}}}{||\Sigma||_{\mathcal{H}_{2,\tau}}}$}\label{tab6}
\centering
\begin{tabular}{|c|c|c|c|c|}
\hline
\textbf{BT}     & \textbf{TLBT}      & \textbf{HOMORA} & \textbf{TLHMORA}   & \textbf{TLPHMORA} \\ \hline
0.9965 & $1.6369\times 10^{-7}$ & 0.9999 & $1.6369\times 10^{-7}$ & $1.6369\times 10^{-7}$\\\hline
\end{tabular}
\end{table} Table \ref{tabb} compares the simulation time, and it can be seen that the TLPHMORA takes the least time for execution.
\begin{table}[h]
\centering
\caption{Simulation Time (sec)}\label{tabb}
\begin{tabular}{|c|c|c|c|c|}
\hline
\textbf{BT}     & \textbf{TLBT}      & \textbf{HOMORA} & \textbf{TLHMORA}   & \textbf{TLPHMORA} \\ \hline
$3.42$ & $3.96$     & $2.31$  & $0.45$     & $0.42$    \\ \hline
\end{tabular}
\end{table}
\section{Conclusion}
We formulate the time-limited $\mathcal{H}_2$-optimal MOR problem and derive first-order optimality conditions for the local optimum of the problem. We proposed a heuristic algorithm which attempts to generate a local optimum for the problem. We also proposed two new algorithms that generate a ROM, which satisfies a subset of the optimality conditions of the local optimum for the time-limited and frequency-limited $\mathcal{H}_2$-optimal MOR problems. Our algorithms are computational efficient and accurate as compared to the existing algorithms. The numerical simulation confirms the theoretical results proposed in the paper.
\section*{Funding}
This work is supported by the National Natural Science Foundation of China under Grant (No. $61873336$, $61873335$), and supported in part by $111$ Project (No. D$18003$). M. I. Ahmad is supported by the Higher Education Commission of Pakistan under the National Research Program for Universities Project ID $10176$.
\section*{Disclosure Statement}
The authors declare no conflict of interest.

\end{document}